\spnewtheorem{obs}{Observation}{\bfseries}{\itshape}
\crefname{conjecture}{Conjecture}{Conjectures}
\crefname{obs}{Observation}{Observations}
\crefname{theorem}{Theorem}{Theorems}
\crefname{section}{Section}{Sections}
\crefname{appendix}{Appendix}{Appendices}
\crefname{lemma}{Lemma}{Lemmata}
\crefname{figure}{Fig.}{Figs.}
\Crefname{figure}{Figure}{Figures} 
\renewcommand{\orcidID}[1]{\href{https://orcid.org/#1}{\includegraphics[scale=.03]{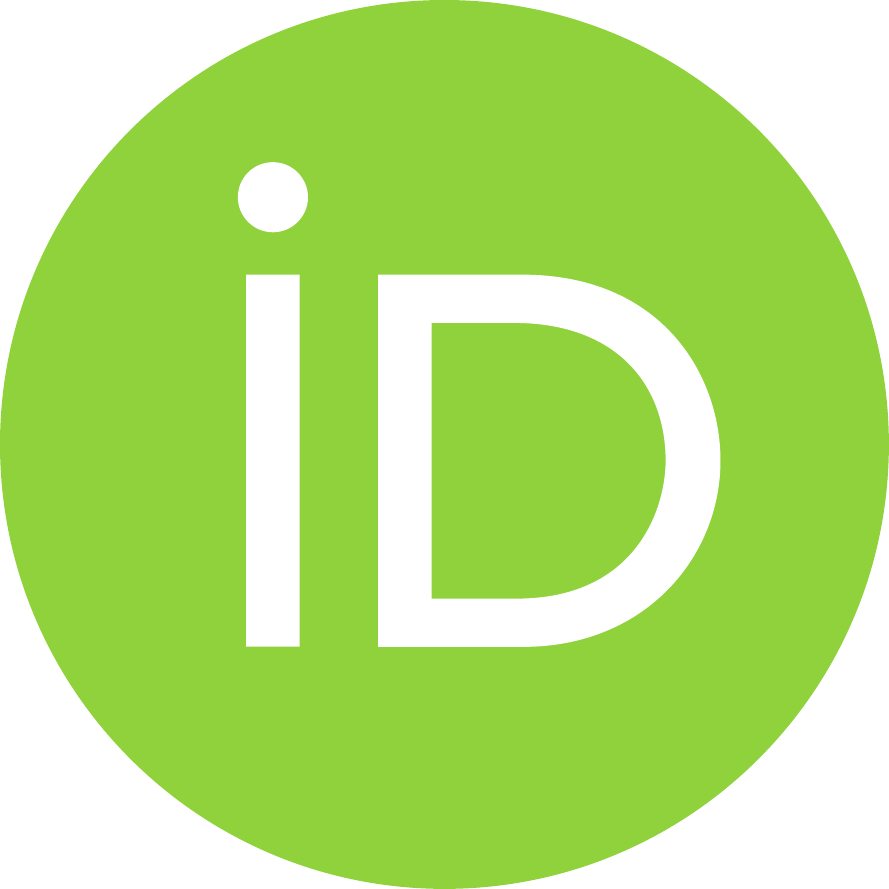}}}
\newenvironment{claimproof}[1]{\underline{Proof of Claim:}\space#1}{\leavevmode\unskip\penalty9999 \hbox{}\nobreak\hfill\quad\hbox{$\blacksquare$}}
\newcommand{\caterpillar}{active subdrawing}
\begin{document}

\title{Plane Spanning Trees in Edge-Colored Simple~Drawings of~$K_{n}$\thanks{
    We are particularly grateful to Irene Parada for bringing this problem to our attention. We also thank the organizers of the $4^{th}$ DACH Workshop on Arrangements, that took place in February 2020 in Malchow and was funded by Deutsche Forschungsgemeinschaft (DFG), the Austrian Science Fund (FWF) and the Swiss National Science Foundation (SNSF). M.~H.~is supported by SNSF Project~200021E-171681. R.~P.~and A.~W.~are supported by FWF grant~W1230. J.~O.~is supported by ERC StG 757609. N.~S.~is supported by DFG Project MU3501/3-1. D.~P. and B.~V. are supported by FWF Project \mbox{I 3340-N35}.
  }}
\author{%
  Oswin~Aichholzer\inst{1}\orcidID{0000-0002-2364-0583} 
  \and
  Michael~Hoffmann\inst{2}\orcidID{0000-0001-5307-7106} 
  \and
  Johannes~Obenaus\inst{3}\orcidID{0000-0002-0179-125X} 
  \and
  Rosna~Paul\inst{1}\orcidID{0000-0002-2458-6427} 
  \and
  Daniel~Perz\inst{1}\orcidID{0000-0002-6557-2355} 
  \and
  Nadja~Seiferth\inst{3}\orcidID{0000-0003-1462-006X} 
  \and
  Birgit~Vogtenhuber\inst{1}\orcidID{0000-0002-7166-4467} 
  \and\\
  Alexandra~Weinberger\inst{1}\orcidID{0000-0001-8553-6661}
}

\authorrunning{Aichholzer, Hoffmann, Obenaus, Paul, Perz, Vogtenhuber, and Weinberger}
\institute{%
  Institute of Software Technology, Graz University of Technology, Graz, Austria\\
  \email{\{oaich,ropaul,daperz,bvogt,aweinber\}@ist.tugraz.at} \and
  Department of Computer Science, ETH Z{\"u}rich, Switzerland\\
  \email{hoffmann@inf.ethz.ch} \and
  Institut f{\"u}r Informatik, Freie Universit{\"a}t Berlin, Germany\\ 
  \email{\{johannes.obenaus,nadja.seiferth\}@fu-berlin.de}
}

\maketitle

  \begin{abstract}
	K{\'{a}}rolyi, Pach, and T{\'{o}}th proved that every 2-edge-colored 
	straight-line drawing of the complete graph contains a monochromatic plane spanning tree. 
	It is open if this statement generalizes to other classes of drawings, specifically, to \emph{simple drawings} of the complete graph. 
	These are drawings where edges are represented by Jordan arcs, any two of which intersect at most once. 
	We present two partial results towards such a generalization. First, we show that the statement holds for cylindrical simple drawings. 
	(In a \emph{cylindrical} drawing, all vertices are placed on two concentric circles and no edge crosses either circle.) 
	Second, we introduce a relaxation of the problem in which the graph is $k$-edge-colored, 
	and the target structure must be \emph{hypochromatic}, that is, avoid (at least) one color class.  
	In this setting, we show that 
	every $\lceil (n+5)/6\rceil$-edge-colored
	monotone simple drawing of $K_n$ contains a hypochromatic plane spanning tree. 
	(In a \emph{monotone} drawing, every edge is represented as an $x$-monotone curve.)

\keywords{Simple drawing \and Cylindrical drawing \and Monotone drawing \and Plane subdrawing.}
\end{abstract}

\section{Introduction}\label{sec:intro}

A \emph{simple drawing} of a graph
represents vertices by pairwise distinct points (in the Euclidean plane) and edges by Jordan 
arcs connecting their endpoints 
such that (1)~no (relative interior of an) edge passes through a vertex and (2) every pair of edges intersect at most once, either in a common endpoint or in their relative interior, forming a proper crossing. 
Simple drawings (also called \emph{good drawings}~\cite{EG_1973} or \emph{simple topological graphs}~\cite{kyncl2009}) have been well studied, amongst others, 
in the context of crossing minimization (see e.g.~\cite{schaefer2017}), as it is known that every crossing-minimal drawing of a graph is simple. 
Also every straight-line drawing is simple. 
Further well-known classes of simple drawings relevant for this work are 
\emph{pseudolinear drawings}, where every edge can be extended to a bi-infinite Jordan arc such that every pair of them intersects exactly once; 
\emph{cylindrical simple drawings}, where all vertices are placed on two concentric circles, no edge crosses either circle, and edges between two vertices on the outer (inner) circle lie completely outside (inside) that circle;  
\emph{2-page book drawings}, where all vertices lie on a line and no edge crosses that line; and 
\emph{monotone simple drawings}, where all edges are \mbox{$x$-monotone} curves. 
Unless explicitly mentioned otherwise, all considered drawings are simple, and the term simple is mostly omitted.

In this paper we are concerned with finding plane substructures in simple drawings. Specifically, we study the existence of plane spanning trees in edge-colored simple drawings of the complete graph $K_n$. A \emph{$k$-edge-coloring} of a graph is a map from its edge set to a set of $k$ colors.\footnote{Note that the coloring need \emph{not} be proper nor have any other special properties.} 
A subgraph $H$ of a $k$-edge-colored graph $G$ is \emph{hypochromatic} if the edges of $H$ use at most $k-1$ colors, that is, $H$ avoids at least one of the $k$ color classes. If all edges of $H$ have the same color, then $H$ is \emph{monochromatic}. We are inspired by the following conjecture.

\begin{conjecture}\label{conj:main}
  Every 2-edge-colored simple drawing of $K_n$ contains a monochromatic plane spanning tree.
\end{conjecture}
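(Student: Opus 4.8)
The plan is to topologize the Károlyi--Pach--Tóth induction, exploiting one feature that simple drawings give for free: \emph{every star is plane}. Two edges sharing an endpoint already meet there, and since any two edges of a simple drawing cross at most once, they cannot cross again; hence the star of all edges at any vertex $v$ is a plane spanning tree. By pigeonhole at least $\lceil (n-1)/2\rceil$ of these edges share a color, say red, so a large red plane star is always present. Planarity of a tree concerns only crossings among its \emph{own} edges, so the entire difficulty lies in the coloring: we must reattach the vertices joined to $v$ in blue into a single red plane tree, or else argue that blue wins.

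First I would set up an augmentation scheme. Start from any monochromatic edge --- say red --- as a seed and grow a red plane tree $T$ on a vertex set $S$, maintaining the invariant that $T$ is a caterpillar (the structure the paper later formalizes under the name \caterpillar{}), so that its uncrossed frontier stays combinatorially simple. To absorb a vertex $w\notin S$, look for a red edge $wx$, $x\in S$, that crosses no edge of $T$; the caterpillar shape is intended to guarantee that the admissible attachment points $x$ occupy a contiguous arc along the spine, letting a rotational sweep around $w$ locate one. When no red attachment is available, a Kempe-chain-style exchange --- rerouting through the blue star at $w$ --- should let us pivot to growing a blue caterpillar instead. A global red/blue discharging argument would then decide which color reaches spanning size.

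The hard part, and the reason the conjecture is still open, is precisely this augmentation step: coupling \emph{color} and \emph{planarity} with no geometric separator to lean on. In the straight-line proof the line supporting a convex-hull edge splits the point set and controls which edges may cross which, so on one side a monochromatic non-crossing attachment can always be forced. A Jordan arc has no supporting line: the arcs of $T$ may weave arbitrarily, there is no partition of $V\setminus S$ into independently solvable regions, and the admissible attachment arc for $w$ can be empty in \emph{both} colors at once. This is exactly what forces the retreat to restricted classes, where a topological separator reappears --- the two concentric circles in the cylindrical case, and the left-to-right order supplied by $x$-monotonicity in the (hypochromatic) monotone case.

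A complete proof would therefore need a genuinely topological replacement for the supporting line: for a chosen monochromatic edge $uv$, a monochromatic uncrossed Jordan curve through $u$ and $v$ that separates $V\setminus\{u,v\}$ into two sides across which crossings are controlled and the coloring can be forced. Constructing such a curve --- or some other device that simultaneously pins down color and planarity during augmentation --- is the principal obstacle, and I do not expect the straight-line argument to go through without it.
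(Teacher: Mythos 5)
You were asked to prove \cref{conj:main}, but note that the paper itself does not prove it: the statement is explicitly a conjecture and remains open. The paper only establishes it for restricted classes --- $2$-page book drawings (\cref{prop:2-page-book}), pseudolinear drawings (\cref{prop:pseudolinear}), cylindrical drawings (\cref{thm:cylindrical}) --- and for $n\le 7$ by exhaustive computation (\cref{lem:computation}). To your credit, your proposal does not actually claim a proof, and your preliminary observations are correct: in a simple drawing adjacent edges never cross, so the full star at any vertex is a plane spanning tree, and pigeonhole yields a monochromatic plane star with at least $\lceil (n-1)/2\rceil$ edges. But the argument stops exactly where you say it does, and the gap is genuine and fatal as written. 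You give no reason why the admissible attachment points for a new vertex $w$ should form a contiguous arc along the spine of your caterpillar (in a general simple drawing the tree's arcs can weave so that this set is scattered or empty); the Kempe-style pivot to the other color comes with no invariant --- the blue caterpillar you switch to need not contain the red one's vertex set, so progress can be lost and nothing rules out cycling between the two colors indefinitely; and the ``global discharging argument'' that is supposed to decide which color spans is not specified at all. As you yourself note, the attachment set can be empty in \emph{both} colors simultaneously, which is precisely why no local augmentation scheme of this shape is known to close the conjecture.

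That said, your diagnosis of the missing ingredient matches how the paper's partial results actually work, which is worth recording. The pseudolinear proof recovers both straight-line crutches: an uncrossed (hence monochromatic, after the standard reduction via \cref{obs:2_uncrossed}) boundary cycle playing the role of the convex hull, and $x$-monotonicity via the fact that every pseudoline arrangement is isomorphic to an $x$-monotone one. The cylindrical proof uses the two uncrossed cycles as the topological separator you call for, and then runs a back-and-forth rotational sweep that builds exactly the alternating-color caterpillar you envisage --- with reversal of the rotation direction playing the role of your color pivot. Crucially, its correctness rests on invariants \ref{j:1}--\ref{j:4}, of which \ref{j:4} (strict vertex-set progress between successive sweeps, or termination) is the delicate one; it is proved by crossing arguments that are only available because the sweep is anchored to the uncrossed cycles. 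So your plan is a faithful reconstruction of the known technique and a correct account of why it does not extend, but it is not a proof of \cref{conj:main}, and the paper offers none either.
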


K{\'{a}}rolyi, Pach, and T{\'{o}}th~\cite{straight_lines} proved the statement for straight-line drawings, where the $2$-edge-coloring can also be interpreted as a Ramsey-type setting, where one color corresponds to the edges of the graph and the other color to the edges of its complement. Such an interpretation is less natural in the topological setting, where the edges are not implicitly defined by placing the vertices.

Unfortunately, a proof of \cref{conj:main} seems elusive. 
However, we show that it holds for specific classes of simple drawings, such as $2$-page book drawings, pseudolinear drawings, and cylindrical drawings.
The result for $2$-page book drawings can be shown straightforwardly. 
The statement for pseudolinear drawings follows from generalizing the proof for straight-line drawings by K{\'{a}}rolyi, Pach, and T{\'{o}}th~\cite{straight_lines} to this setting. 

\begin{restatable}{proposition}{thmTwoPageBook}\label{prop:2-page-book}
  Every $2$-edge-colored $2$-page book drawing of~$K_n$ contains a plane monochromatic spanning tree. 
\end{restatable}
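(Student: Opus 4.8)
The plan is to reduce the statement to the straight-line result of Károlyi, Pach, and Tóth~\cite{straight_lines} by exploiting the very restricted crossing structure of a $2$-page book drawing. First I would record the key feature of such drawings: since the two pages lie in the two open half-planes bounded by the spine, an edge on one page cannot cross an edge on the other page (they can only meet at a common endpoint on the spine). Within a single page, two edges cross if and only if their endpoints \emph{interleave} along the spine; that is, writing the vertices as $v_1,\dots,v_n$ in their spine order, the edges $v_av_b$ and $v_cv_d$ with $a<b$ and $c<d$ cross exactly when $a<c<b<d$ or $c<a<d<b$. Consequently, a subgraph is plane precisely when, on each page separately, its edges are pairwise non-interleaving. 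In particular, any subgraph whose edges are pairwise non-interleaving \emph{irrespective} of the page assignment is automatically plane, since it is then non-interleaving on each page as well.

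The second step is to observe that ``pairwise non-interleaving with respect to the spine order'' is exactly the condition ``pairwise non-crossing in a straight-line drawing whose vertices lie in convex position in that same order.'' I would therefore build an auxiliary straight-line drawing $D'$ of $K_n$ by placing $n$ points in convex position in the order $v_1,\dots,v_n$ (say, on a circle), drawing every edge as a straight segment, and keeping the given $2$-edge-coloring. Two segments of $D'$ cross if and only if their endpoints alternate around the polygon, equivalently the intervals $[a,b]$ and $[c,d]$ properly overlap, which is precisely the interleaving condition above; note in particular that $v_1v_n$ becomes a hull edge and, like the interval $[1,n]$ on the spine, crosses nothing.

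Now I would apply the theorem of Károlyi, Pach, and Tóth~\cite{straight_lines} to $D'$, obtaining a monochromatic plane spanning tree $T$ of $D'$. Being plane in convex position, the edges of $T$ are pairwise non-interleaving. Transferring $T$ back to the given $2$-page book drawing (the same vertex pairs with the same colors), no two of its edges interleave, so in particular no two of them lie on a common page and interleave; hence $T$ is crossing-free, and it is a monochromatic plane spanning tree of the book drawing, as required.

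There is essentially no hard step here, which is why the statement can be called straightforward: the only point needing care is the crossing characterization of the two pages, namely that same-page crossings are exactly interleavings while cross-page pairs never cross. Once this is in place, the additional freedom of the second page can only remove crossings relative to the convex-position case, so the problem is at most as hard as the convex instance already settled in~\cite{straight_lines}.
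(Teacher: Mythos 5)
Your proof is correct, but it takes a genuinely different route than the paper. You reduce to the straight-line result: since in a simple $2$-page book drawing two edges can cross only if they lie on the same page and their endpoints interleave in the spine order, the crossing pairs of the book drawing form a subset of the crossing pairs of the convex-position straight-line drawing with the same vertex order and the same coloring; hence a monochromatic plane spanning tree obtained from K\'{a}rolyi, Pach, and T\'{o}th~\cite{straight_lines} in the convex drawing transfers verbatim back to the book drawing. (The one step you assert without detailed proof---that non-interleaving same-page arcs never cross---is indeed valid: by a parity argument such arcs cross an even number of times, and simplicity caps this at one, hence zero.) The paper instead argues directly and self-containedly: it repeatedly deletes vertices incident to an uncrossed edge of each color, observes that in the reduced drawing the uncrossed spine path $v_1v_2\cdots v_j$ is forced to be monochromatic (a bichromatic pair of consecutive spine edges would have triggered a deletion), and then re-adds the deleted vertices via \cref{obs:2_uncrossed}. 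Your reduction is shorter and needs nothing beyond the crossing characterization, but it imports the theorem of~\cite{straight_lines} as a black box (in fact only its convex-position special case is invoked, which admits a much simpler proof than the general straight-line statement); the paper's argument needs no external theorem, exhibits the tree explicitly (a monochromatic spine path plus pendant uncrossed edges), and builds on \cref{obs:2_uncrossed}, a tool the paper reuses elsewhere, for instance in Step~2 of the proof of \cref{thm:cylindrical}. Note also that your argument exploits a feature particular to book drawings---that crossings are determined purely by the vertex order along the spine---whereas the paper's vertex-deletion technique is the more portable one.
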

\begin{restatable}{proposition}{thmPseudolinear}\label{prop:pseudolinear}
	Every 2-edge-colored pseudolinear drawing of~$K_n$ contains a plane monochromatic spanning tree.
\end{restatable}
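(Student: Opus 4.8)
The plan is to transfer the straight-line argument of Károlyi, Pach, and Tóth~\cite{straight_lines} to the pseudolinear setting by passing to the underlying combinatorial structure. A pseudolinear drawing of $K_n$ comes with an arrangement of $\binom{n}{2}$ pairwise-once-crossing pseudolines, namely the extensions $\ell_{uv}$ of its edges. For every ordered triple $(u,v,w)$ of vertices I would define the orientation of $(u,v,w)$ as the side of $\ell_{uv}$ on which $w$ lies; by the topological representation theorem for rank-$3$ oriented matroids these orientations form a (generally non-stretchable) \emph{generalized configuration of points}, that is, they satisfy exactly the chirotope axioms obeyed by the orientations of an actual planar point set. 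The single geometric fact I need to record is that this combinatorial data already determines the drawing's crossing pattern: two edges $uv$ and $xy$ with distinct endpoints cross if and only if $x,y$ lie on opposite sides of $\ell_{uv}$ and $u,v$ lie on opposite sides of $\ell_{xy}$, which is a statement purely about the four orientations involved and holds verbatim for pseudolines because $\ell_{uv}$ and $\ell_{xy}$ meet exactly once.

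With this dictionary in hand, I would re-run the straight-line proof step by step. That proof is \emph{order-type generic}: it only ever uses the existence of extreme (convex-hull) vertices, the radial order of the remaining vertices around a fixed vertex, the fact that two edges sharing an endpoint never cross, and the crossing predicate for disjoint edges, all combined in an induction on $n$. Each of these notions is defined solely through orientations of triples: extreme vertices, angular orders, and crossings are read off from the chirotope, and deleting a vertex again yields a generalized configuration (the oriented-matroid restriction). Hence every inequality, containment, and planarity check in the straight-line proof has a literal counterpart here, and the induction produces a monochromatic plane spanning tree exactly as before, which gives \cref{prop:pseudolinear}.

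The main obstacle is precisely that pseudoline arrangements need not be stretchable, so I cannot simply replace the drawing by a point set and quote~\cite{straight_lines}; instead I must check that no step secretly uses coordinates or metric quantities. Two points deserve the most care. First, the crossing lemma above must be established directly from pseudolinearity, confirming that the unique crossing of $\ell_{uv}$ and $\ell_{xy}$ falls inside both arcs exactly when the four orientations interleave. Second, if the straight-line proof invokes any continuous or rotational argument (such as sweeping a directed line and tracking a color balance), it has to be recast discretely via the circular sequence of permutations associated with the arrangement, which exists for pseudolines just as for points. Once these combinatorial substitutes are in place, the remainder of the argument is routine bookkeeping.
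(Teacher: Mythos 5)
Your proposal is correct in outline and pursues the same overall strategy as the paper---generalizing the K{\'a}rolyi--Pach--T{\'o}th induction to pseudolinear drawings---but the transfer mechanism is genuinely different. You pass to the chirotope of the generalized point configuration induced by the extending pseudolines, prove that the crossing pattern is determined by this combinatorial data (your crossing lemma is correct: since $\ell_{uv}$ and $\ell_{xy}$ meet exactly once, the edges cross iff both endpoint pairs are separated), and then argue that the straight-line proof is \emph{order-type generic} and can be re-run on the abstract data, with sweeps replaced by allowable sequences. The paper instead works directly on the drawing: it isolates exactly two geometric ingredients of the straight-line proof, namely (i) that the outermost edges form an uncrossed cycle (proved by a short argument about the extending pseudolines) and (ii) that all edges may be assumed $x$-monotone (citing the theorem that every pseudoline arrangement is isomorphic to an $x$-monotone one~\cite{euclidean_pseudo}), and then re-derives the induction concretely: vertices incident to uncrossed edges of both colors are removed and re-added via \cref{obs:2_uncrossed}, the monochromatic boundary cycle pins down the colors of the extreme uncrossed edges, left and right trees $T_i^l$, $T_i^r$ are obtained by induction on the $x$-order, and a discrete intermediate-value argument produces an index where both are red, connected through $x_ix_{i+1}$ or an uncrossed boundary edge. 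What your abstraction buys is modularity---the argument would apply to any drawing class whose crossings are chirotope-determined---but it concentrates the real work in the one claim you assert rather than verify: that \emph{every} step of~\cite{straight_lines} uses only order-type primitives. The critical instance is the left/right separation property of the sweep order (edges inside a prefix of the order cannot cross edges inside the complementary suffix), which is not an immediate chirotope axiom; making your circular-sequence substitute rigorous amounts to Goodman--Pollack allowable-sequence theory, which is essentially equivalent to---and no lighter than---the $x$-monotone representation theorem the paper invokes at precisely this point. So your plan is viable, but to complete it you would either have to carry out that combinatorial verification or, as the paper does, simply redraw the arrangement $x$-monotonically and let the geometry do the bookkeeping.
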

See \cref{app:preliminary_results} for proofs of those statements. \newpage
The result for cylindrical drawings is more involved; 
it forms our first main contribution. 

\begin{restatable}{theorem}{thmCylindrical}\label{thm:cylindrical}
  Every $2$-edge-colored cylindrical simple drawing of~$K_n$ contains a monochromatic plane spanning tree.
\end{restatable}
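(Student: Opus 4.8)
The plan is to first strip away all topology that does not matter, reducing \cref{thm:cylindrical} to a purely combinatorial statement, and then to settle that statement by a case analysis driven by the two circles.

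\textbf{Reduction to three convex pieces.} I partition the edges into \emph{outer} edges (both endpoints on the outer circle), \emph{inner} edges (both on the inner circle), and \emph{spokes} (one endpoint on each circle). The first thing I would prove is that edges of different types never cross: an outer edge lives in the exterior of the outer circle, a spoke in the closed annulus between the circles, and an inner edge in the interior of the inner circle, so two edges of different types can meet only at a shared vertex. Hence a monochromatic subgraph is plane if and only if its outer, inner, and spoke parts are each individually crossing-free. Next I would pin down crossings \emph{within} one circle by a parity argument: a chord of a disk with endpoints on the boundary leaves the two endpoints of a non-interleaving chord on the same side, so by the Jordan curve theorem the two arcs cross an even number of times, hence—by simplicity—not at all, while interleaving chords cross an odd number of times, hence exactly once. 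Thus each circle behaves combinatorially exactly like a straight-line drawing in convex position, which is pseudolinear, so \cref{prop:pseudolinear} guarantees a monochromatic plane spanning tree of the outer vertices in some color $c_O$, and likewise of the inner vertices in some color $c_I$.

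\textbf{Connecting the two circles.} Since a single spoke is crossing-free with all outer and inner edges, the cleanest target is a monochromatic plane spanning tree of each circle joined by one spoke of the same color. This succeeds immediately whenever some color $c$ plane-spans both circles and a $c$-colored spoke exists. Because each circle is plane-spannable in at least one color, the only resistant configurations are: (i) the two circles are plane-spannable only in opposite colors; and (ii) both circles are plane-spannable only in one color $c$ but every spoke carries the other color $c'$. I would resolve both with spokes. For (ii) the spokes form a complete bipartite $c'$-colored family in the annulus, out of which I would build a $c'$-colored non-crossing spanning \emph{caterpillar} using spokes alone—ordering the inner and outer vertices cyclically and attaching them in a zig-zag so that consecutive spokes share an endpoint and therefore cannot cross. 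For (i), say outer is forced to red and inner to blue, I would repair the circle whose native color is wrong: to get a red tree I keep a red plane spanning tree of the outer circle and attach every inner vertex by a red spoke, choosing these spokes so that they share outer endpoints (stars) and are thus automatically pairwise non-crossing; the symmetric move produces a blue tree.

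\textbf{Spoke crossings and the main obstacle.} To make the last two moves rigorous I need a crossing rule for spokes analogous to the interleaving rule for chords. The subtlety is that the annulus is not simply connected, so a spoke could in principle wind and the parity argument does not transfer verbatim; the facts I can rely on are that spokes incident to a common vertex never cross, and that cutting the annulus along one spoke turns it into a disk in which the remaining spokes inherit a convex-type crossing pattern. I expect the genuine difficulty to be exactly this coordination: showing that in the resistant configurations there always exist \emph{enough} monochromatic spokes, arranged so that the required stars or the spoke-caterpillar stay crossing-free, while simultaneously honoring the colors forced on the two circles (for instance, ruling out that some inner vertex has only blue spokes at the same time as some outer vertex has only red spokes, which would block both a red and a blue repair). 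In short, the topological reduction is the easy part; the real work is the combinatorial case analysis that threads a single color through the outer tree, the inner tree, and the connecting spokes at once.
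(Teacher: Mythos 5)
Your preparatory reductions are correct and worth keeping: edges of the three types are confined to the inner disk, the closed annulus, and the outer region respectively, so only spoke--spoke crossings matter; and the interleaving parity argument does show that each circle's induced subdrawing is weakly isomorphic to a convex straight-line drawing, so each circle has a monochromatic plane spanning tree, and the case where both circles are spannable in a common color $c$ with a $c$-colored spoke is done. Your case (ii) is also salvageable, but not by your argument: if every spoke has color $c'$, then color class $c$ is not spanning and \cref{obs:span} applies directly. Your zig-zag construction as stated is not justified, because ``consecutive spokes share an endpoint'' only excludes crossings between \emph{consecutive} spokes of the path; two non-consecutive spokes share no endpoint and may cross. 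That planarity is exactly why the complete-bipartite result~\cite{complete_biparite_spanning} behind \cref{obs:span} is nontrivial.

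The genuine gap is case (i), and it is not a detail but the entire content of the theorem. Your repair attaches every inner vertex to the red outer tree by a red spoke, ``choosing these spokes so that they share outer endpoints (stars)''. This is doubly unjustified: distinct inner vertices need not have any red spoke to a \emph{common} outer vertex, so a single star cannot be forced, and spokes belonging to different stars are independent edges that may cross. Worse, a pairwise non-crossing choice of red spokes covering all inner vertices may simply not exist: already for two inner vertices $u_1,u_2$ and two outer vertices $w_1,w_2$ with $w_1w_2$ red, $u_1u_2$ blue, the only red spokes can be $u_1w_1$ and $u_2w_2$, and these can cross, so the red repair fails outright and one must prove that the blue repair then succeeds --- a coordination between the two colors that becomes genuinely intricate for larger $n$, and which you explicitly leave open (``I expect the genuine difficulty to be exactly this coordination''). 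This coordination is precisely what the paper's proof supplies: a multi-phase sweep over the side edges, guided by the vertex rotations, that builds a bichromatic plane caterpillar $H$ in which every covered vertex of one circle is incident to an edge of the opposite color (invariants \ref{j:1}, \ref{j:2}, \ref{j:4}); planarity comes from \cref{obs:order} plus simplicity, and the progress argument for restarted sweeps in the reverse direction is the technical heart of the proof. Nothing in your proposal substitutes for that machinery, so the argument is incomplete exactly where the difficulty lies.
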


In light of the apparent challenge in attacking \cref{conj:main}, we also consider the following generalized formulation, which uses more colors.

\begin{conjecture}\label{conj:gen}
  For $k\ge 2$, every $k$-edge-colored simple drawing of $K_n$ contains a hypochromatic plane spanning tree.
\end{conjecture}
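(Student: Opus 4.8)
The plan is to recast the statement as a \emph{color-deletion} problem: proving \cref{conj:gen} is equivalent to showing that, for every $k$-edge-colored simple drawing $D$ of $K_n$, there is at least one color class $c$ whose removal leaves a subdrawing $D-c$ that still contains a plane spanning tree. Two remarks guide the strategy. First, for $k=2$ a hypochromatic tree is simply a monochromatic one, so \cref{conj:gen} contains \cref{conj:main} as a special case; a complete proof therefore seems out of reach, and the realistic target is the regime of larger $k$, where each color class is sparser and hence easier to delete. Second, more colors give more freedom, so I would look for a threshold on $k$ (as a function of $n$) above which a deletable color is guaranteed, mirroring the partial results the paper obtains for structured drawings.

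Concretely, in the structured classes I would exploit a \emph{canonical} plane spanning tree together with a pigeonhole argument. For a monotone drawing, sorting the vertices $v_1,\dots,v_n$ by $x$-coordinate makes the path $P=v_1v_2\cdots v_n$ on $x$-consecutive vertices automatically plane, since any two of its edges either share an endpoint or have interior-disjoint $x$-ranges. The path $P$ has $n-1$ edges, so with $k$ colors some color $c$ occurs at most $\lfloor (n-1)/k\rfloor$ times on $P$; once $k\ge (n+5)/6$ this count is less than $6$, hence at most $5$, so color $c$ appears on at most five edges of $P$. The remaining task is then purely local: reroute each of these at most five $c$-colored edges through a short detour that uses only non-$c$ edges and introduces no crossings, using the $x$-monotone triangles on consecutive vertices as the basic rerouting gadget. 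Assembling these constant-size repairs into a single plane spanning tree avoiding color $c$ would yield the stated $\lceil (n+5)/6\rceil$ bound, and geometric scaffolding of the same flavor (the two concentric circles) drives our cylindrical result, \cref{thm:cylindrical}.

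The hard part is extending this scheme to arbitrary simple drawings, where none of the geometric scaffolding survives: there is no $x$-order, no pair of bounding circles, hence no obvious canonical plane spanning tree, nor any control on which pairs of edges cross. Two sub-obstacles stand out. One is to guarantee a sufficiently \emph{rich} and \emph{robust} family of plane spanning trees in a general simple drawing of $K_n$, ideally one whose members' edge sets are spread out enough that no single color class can meet all of them; equivalently, one must show that $K_n$ minus any one color class still admits a plane spanning tree once $k$ is large enough. The other is that the local rerouting step has no topological analogue: replacing a forbidden edge by a detour in a general simple drawing may be forced to cross previously chosen edges, so certifying planarity of the assembled tree is exactly where the argument stalls. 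I expect this certification, turning \enquote{few forbidden edges} into \enquote{a global plane spanning tree avoiding them} without a geometric order to orchestrate the detours, to be the principal obstacle, and overcoming it (perhaps via a new canonical plane structure valid for all simple drawings) is what a full proof of \cref{conj:gen} would require.
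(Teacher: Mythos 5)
The statement you were asked about is \cref{conj:gen}, which is open: the paper does not prove it, and neither do you. Your framing is consistent with the paper's --- you correctly observe that $k=2$ specializes to \cref{conj:main} (the paper additionally notes the converse direction via merging color classes, so the two conjectures are in fact equivalent) and that only partial results for restricted drawing classes are within reach. So there is no paper proof of the conjecture itself to match; what can be compared is your sketch of the monotone partial result against the paper's proof of \cref{thm:hypon2}, and there your argument has a genuine gap.

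Your plan pigeonholes a color $c$ occurring at most five times on the plane $x$-monotone path $P=v_1v_2\cdots v_n$ and then \enquote{reroutes} those at most five $c$-edges by constant-size detours. That rerouting step is unproven, and it is exactly where the difficulty sits. The pigeonhole controls occurrences of $c$ only on $P$: the natural detour edges may themselves be colored $c$ (for instance $v_{i-1}v_{i+1}$ and $v_iv_{i+2}$ can both carry color $c$, since a color class may be arbitrarily large elsewhere), and any longer detour edge $v_av_b$ with $b-a\ge 3$ is non-adjacent to some surviving path edges inside its $x$-range and can cross them, or cross other detours, so planarity of the assembled tree is never certified. The paper's proof of \cref{thm:hypon2} replaces this step with a different mechanism: it partitions the vertex sequence into $k-1$ groups of at most $7$ consecutive vertices (adjacent groups sharing one vertex), invokes \cref{lem:computation} --- a machine verification that \cref{conj:main} holds for all simple drawings with $n\le 7$ --- to guarantee that each group either has a monochromatic plane spanning tree or can avoid \emph{any} prescribed color, and then runs a two-phase color selection: the colors of the monochromatic group trees form a set $S$ with $|S|\le k-1<k$, a color outside $S$ is deleted, each group contributes an induced plane spanning tree avoiding that color, and monotonicity makes the union of these trees plane. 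Note that the constant $6$ in $\lceil(n+5)/6\rceil$ comes from the group size $d+1=7$ for which the computations of \cref{lem:computation} are feasible, not from counting occurrences on a path. Your local repair gadget, if made rigorous, would in effect require \cref{conj:main} for small monotone drawings --- which is precisely the ingredient the paper supplies computationally; without it, the rerouting step fails.
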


Note that both conjectures are in fact equivalent: On the one hand, \cref{conj:gen}
implies \cref{conj:main} by setting $k=2$. On the other hand, assuming \cref{conj:gen} holds for some $k$, it also holds for every larger $k'$ because we can simply merge color classes until we are down to $k$ colors. Avoiding any one of the resulting color classes also avoids at least one of the original color classes.

Our second result is the following statement about \emph{monotone} drawings. 

\begin{theorem}\label{thm:hypon2}
  Every $\lceil(n+5)/6\rceil$-edge-colored monotone simple drawing of~$K_n$ contains a hypochromatic plane spanning tree.  \end{theorem}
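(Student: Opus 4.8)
The plan is to work with the left-to-right order $v_1,\dots,v_n$ of the vertices (in a monotone drawing we may assume all $x$-coordinates are distinct) and to build the spanning tree from the two extreme vertices $v_1$ and $v_n$. My main gadget is a family of plane \emph{double stars}: for a threshold $t\in\{1,\dots,n-1\}$ let $D_t$ consist of the spine edge $v_1v_n$, the edges $v_1v_i$ for $2\le i\le t$, and the edges $v_nv_j$ for $t<j\le n-1$. Each $D_t$ is a spanning tree, and I claim it is plane. Indeed, every tree edge is incident to $v_1$ or to $v_n$, so the only pairs that could cross are a $v_1$-edge $v_1v_i$ and a $v_n$-edge $v_nv_j$; but then $i\le t<j$, so the $x$-projections $[x_1,x_i]$ and $[x_j,x_n]$ are disjoint, and since both arcs are $x$-monotone they cannot meet. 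Edges sharing $v_1$ (or sharing $v_n$) meet only there because the drawing is simple, and the spine $v_1v_n$ shares an endpoint with every other tree edge. Note that $D_1$ and $D_{n-1}$ are exactly the stars centred at $v_n$ and at $v_1$, which are plane in any simple drawing.

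First I would dispose of the easy cases. If some color is missing at $v_1$ (no edge incident to $v_1$ carries it), then the star $D_{n-1}$ is a hypochromatic plane spanning tree; symmetrically for $v_n$ and $D_1$. Likewise, any color occurring only on edges between interior vertices is avoided by every $D_t$. Hence I may assume that $v_1$ and $v_n$ are each incident to all $k$ colors and that every color occurs on some $v_1$-edge and some $v_n$-edge. Now sweep $t$ from $1$ to $n-1$. A color $c$ is avoided by $D_t$ exactly when $c$ occurs neither among $\{c(v_1v_i):i\le t\}$ nor among $\{c(v_nv_j):j>t\}$; writing $a_c$ for the smallest index with $c(v_1v_{a_c})=c$ and $b_c$ for the largest index with $c(v_nv_{b_c})=c$, this happens for some valid $t$ iff $b_c<a_c$. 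Thus the double-star family alone fails precisely when $b_c\ge a_c$ for every color, i.e.\ when every color already appears on an \emph{early} $v_1$-edge and a \emph{late} $v_n$-edge.

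To beat this obstruction I would enrich the family of plane trees and then argue by counting. The idea is a peeling recursion: having used $v_1,v_n$ as centres, remove them and repeat on $v_2,\dots,v_{n-1}$, whose extremes $v_2,v_{n-1}$ form the next pair of centres, and so on; the reroutings can be stitched into genuine plane spanning trees because the $x$-projection argument localizes each new double star to the shrinking interval. A color that stays unavoidable throughout the recursion is forced to be incident, in a prescribed and essentially disjoint way, to a constant number of the peeled vertices on each side; carrying out the bookkeeping I expect each persistently unavoidable color to ``consume'' at least six vertices' worth of such incidences that no other unavoidable color can reuse. Since there are only $n$ vertices, this caps the number of simultaneously unavoidable colors at $\lceil(n+5)/6\rceil-1<k$, so at least one color can be avoided and the corresponding member of the family is the desired hypochromatic plane spanning tree; the additive constant and the ceiling absorb the small and boundary cases of the recursion.

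The hard part will be this last step: designing the enriched family so that ``unavoidable by the whole family'' really forces six disjoint units of structure per color, while simultaneously guaranteeing that every member of the family is plane. Planarity of the simple double stars is immediate from the $x$-projection argument, but once reroutings across several peeling rounds interact, one must verify that no two rerouted edges cross and that connectivity to the previously built part is preserved. Keeping these topological conditions compatible with the colour-counting is where the real work ---and the precise value of the constant $6$--- lies.
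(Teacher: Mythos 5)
There is a genuine gap, and you acknowledge it yourself: the entire load-bearing part of the argument is missing. Your double-star family $D_t$ is indeed plane (the $x$-projection argument is correct), and your characterization of when some $D_t$ avoids a color (namely $b_c<a_c$) is fine, but this only handles the easy cases. The case where every color appears both on an early $v_1$-edge and a late $v_n$-edge is precisely where the theorem's content lies, and for that case you offer only a hope: that a ``peeling recursion'' can be designed so that each persistently unavoidable color consumes six disjoint units of structure. No mechanism is given that would force the constant $6$ (or any constant), no argument is given that the stitched trees across peeling rounds remain plane and avoid the chosen color (the connecting edges from $v_1,v_n$ into the inner interval carry colors you do not control), and no counting is actually carried out. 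As written, the proposal proves the theorem only in the special cases disposed of at the start.

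For contrast, the paper's proof uses an ingredient your plan lacks entirely, and it is exactly where the $6$ comes from: \cref{lem:computation}, the computational verification that \cref{conj:main} holds for all simple drawings on at most $7$ vertices. The paper partitions the vertex sequence into $k-1=\lceil(n-1)/6\rceil$ consecutive blocks of at most $7$ vertices, overlapping in single vertices. Inside each block, either some color admits a monochromatic plane spanning tree (put that color into a set $S$ of colors to keep), or --- by \cref{lem:computation} applied to the bicoloring $\{c,\text{rest}\}$ --- the block can avoid \emph{any} prescribed color $c$. Since $|S|\le k-1<k$, pigeonhole yields a color outside $S$ to remove; each block then contributes an induced plane spanning tree avoiding it, and monotonicity makes the union of these trees plane because distinct blocks occupy $x$-ranges meeting only in shared vertices. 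If you want to salvage your approach, you would need to identify a structural substitute for \cref{lem:computation}, i.e., a theorem that bounded configurations of vertices always permit avoiding any one color; without such a statement the counting you envision has nothing to stand on.
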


Finally, note that \emph{some} assumptions concerning the drawing are necessary to obtain any result on the existence of plane substructures. Without any restriction, every pair of edges may cross. The class of simple drawings is formed by two restrictions: forbid adjacent edges to cross and forbid independent edges to cross more than once. 
Both restrictions are necessary in the statement of \cref{conj:main}. 
If adjacent edges
may cross, then one can construct drawings where every pair of adjacent edges crosses (e.g., in the neighborhood of the common vertex), implying that no plane substructure can have a vertex of degree more than one.  
And for \emph{star-simple} drawings, where adjacent edges do not cross but independent edges may cross more than once, already $K_5$ admits $2$-edge-colored star-simple
drawings without any monochromatic plane spanning tree; see \cref{fig:starsimple_noplanemonochromatictree}.

\begin{figure}[htb]
	\centering
	\includegraphics[scale=0.6,page=1]{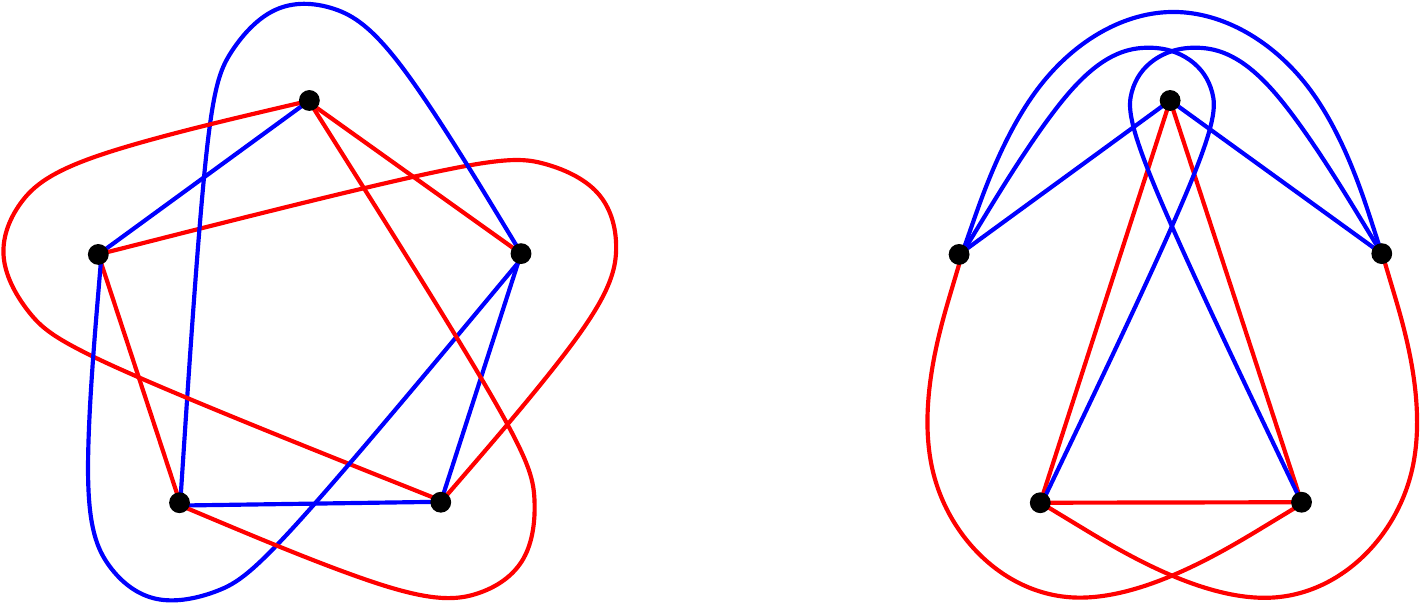}
	\caption{
          Star-simple drawings of $K_5$ without monochromatic plane spanning tree.}
	\label{fig:starsimple_noplanemonochromatictree}
\end{figure}

\paragraph{Related Work.} The problem of finding plane subdrawings in a given drawing has gained some attention over the past decades.
We mention only a few results from the vast literature on plane substructures. In 1988, Rafla~\cite{rafla} conjectured that every simple drawing of $K_n$ contains a plane Hamiltonian cycle. By now the conjecture is known to be true for~$n \leq 9$~\cite{aafhpprsv-agdsc-15} and several classes of simple drawings (e.g., 2-page book drawings, monotone drawings, cylindrical drawings), but remains open in general. See also \cite{bernhart1979book,biniaz_garcia,bose2006partitions,keller2013blockers,rivera2013sufficient} for some results about plane spanning trees in straight-line drawings of complete graphs. In an edge-colored setting, many other coloring schemes were studied in this context, see 
e.g.~\cite{brualdi1996multicolored,erdos1983some}.
 
Observe that if one color class of a drawing is not spanning, the drawing of the remaining colors contains a complete bipartite graph as a subdrawing. 
Recently, it has been shown that every simple drawing of the complete bipartite graph contains a plane spanning tree~\cite{complete_biparite_spanning}.
Consequently, this implies the following lemma, which turns out to be useful later on (see \cref{app:preliminary_results} for the proof).
\begin{restatable}{lemma}{obsNonSpanning}\label{obs:span}
  Let~$D$ be a $k$-edge-colored simple drawing of~$K_n$, for $k\ge 2$. If one of the color classes is not spanning, then $D$ contains a hypochromatic plane spanning tree.
\end{restatable}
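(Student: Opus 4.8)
The plan is to exhibit the desired tree explicitly as a star, which in a simple drawing is automatically plane. Suppose color class~$c$ is not spanning; by definition this means some vertex~$v$ is incident to no edge of color~$c$. I would let $S_v$ be the star centered at~$v$, that is, the set of all $n-1$ edges $\{vu : u \in V(K_n)\setminus\{v\}\}$. Every edge of $S_v$ is incident to~$v$, and $v$ has no incident edge of color~$c$, so no edge of $S_v$ has color~$c$. Hence $S_v$ avoids color class~$c$, uses at most $k-1$ colors, and is therefore hypochromatic. Furthermore $S_v$ is a spanning tree of~$K_n$: it is connected and acyclic and touches every vertex.

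It remains to verify that $S_v$ is drawn plane, and this is the one place where the defining property of simple drawings is used. Any two edges of $S_v$ share the endpoint~$v$, so they are adjacent. In a simple drawing two edges intersect at most once, and two adjacent edges already intersect at their common endpoint; consequently they cannot additionally cross in their relative interiors. Thus no two edges of $S_v$ cross, so $S_v$ is a plane subdrawing, which finishes the argument.

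The only genuine subtlety, and the step I would state most carefully, is precisely this planarity observation; but it is immediate from the axioms and needs no case distinction. In particular it also covers the degenerate situation in which color~$c$ is entirely unused, since then any vertex may serve as~$v$. Finally, I would remark that this recovers the route hinted at just before the statement: the edges avoiding color~$c$ contain the complete bipartite subdrawing $K_{1,n-1}$ (indeed the larger $K_{A,B}$, where $A$ and $B$ are the vertices incident, respectively not incident, to a $c$-colored edge), so the existence of a hypochromatic plane spanning tree would also follow from the fact that every simple drawing of a complete bipartite graph contains a plane spanning tree~\cite{complete_biparite_spanning}. I would nonetheless prefer the star argument, as it sidesteps that machinery and makes the planarity entirely transparent.
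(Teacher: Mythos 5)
Your argument has a genuine gap, and it lies in the very first step: you misread what ``not spanning'' means. In this paper a color class is \emph{not spanning} when its edges contain no spanning tree of $K_n$, i.e., when the subgraph formed by the $c$-colored edges is disconnected on the full vertex set --- not when some vertex is incident to no $c$-colored edge. The latter is only a special case of the former. Concretely, take $K_6$ with $k=2$, color the edges of two vertex-disjoint triangles red and all remaining edges blue: the red class is not spanning, yet every vertex is incident to a red edge, so no vertex $v$ of the kind you require exists, and indeed every star at every vertex uses both colors and is not hypochromatic. This is not a pathological corner case: the lemma is invoked in Step~2 of the proof of \cref{thm:cylindrical} precisely in a situation where both cycles are monochromatic of the \emph{same} color and no side edge has that color --- there the non-spanning color class touches every vertex, so your star argument cannot be applied at all.

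The correct route is the one you relegate to a closing remark, but with a different bipartition. Let $A$ be the vertex set of one connected component of the $c$-colored subgraph and $B=V(K_n)\setminus A$; since no $c$-colored edge runs between $A$ and $B$, the edges avoiding color $c$ contain a simple drawing of the complete bipartite graph with parts $A$ and $B$, which contains a plane spanning tree by~\cite{complete_biparite_spanning}; this tree spans all of $K_n$ and avoids color $c$. Your proposed bipartition (vertices incident, respectively not incident, to a $c$-colored edge) degenerates to $B=\emptyset$ in exactly the cases your star argument also misses. What \emph{is} correct in your write-up is the planarity observation: in a simple drawing adjacent edges meet only at their common endpoint, so any star is plane; this cleanly handles the special case where color $c$ misses some vertex, and there it does avoid the machinery of~\cite{complete_biparite_spanning}. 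But for the lemma as stated and as used, the component-based bipartite argument is unavoidable.
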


\section{Cylindrical Drawings}\label{sec:special_mono}

This section is devoted to
\cref{thm:cylindrical}, which states that every $2$-edge-colored cylindrical drawing of~$K_n$ contains a monochromatic plane spanning tree. 
We give a detailed outline of the proof. 
The full proof can be found in \cref{app:full_proof_cylindrical}.

For easier readability, we introduce some 
names for the different elements of a cylindrical drawing (cf.\ \cref{fig:cylindrical_notation}).
We call the vertices on the inner (outer) circle \emph{inner (outer) vertices}. 
Similarly, we call edges connecting two inner (outer) vertices \emph{inner (outer) edges}; the remaining edges are called \emph{side edges}. 
The edges between consecutive vertices on the inner (outer) circle are called \emph{cycle edges} and
the union of all inner (outer) cycle edges are called inner (outer) \emph{cycle}.
The definition of cylindrical drawings implies that all cycle edges are uncrossed.
The \emph{rotation} of a vertex $v$ is
the circular ordering of all edges incident to $v$.
In this ordering, the
cycle edges separate the inner (outer) edges from the side edges. 
Hence, the rotation of $v$ induces a linear order on the side edges incident to $v$.

\begin{figure}[htb]
	\centering
	\includegraphics[scale=0.6,page=1]{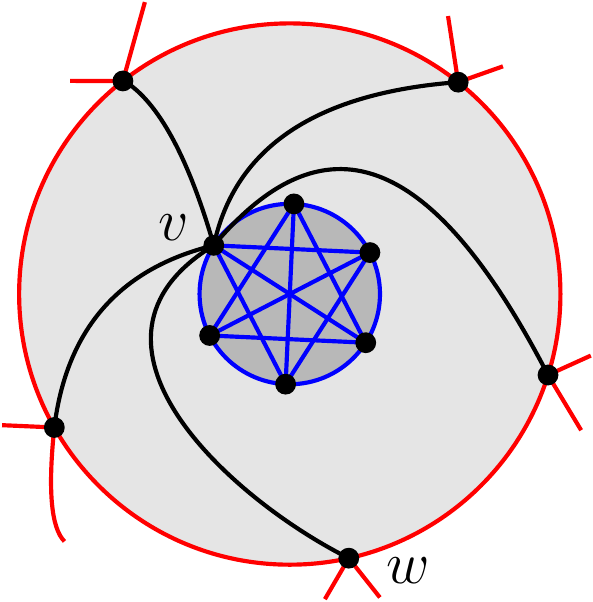}
	\caption{%
		Sketch of a cylindrical drawing. 
		Inner
		edges are drawn blue, outer  
		edges red, and side edges black.
		$vw$ is the first side edge in the clockwise rotation of $v$.
		  }
	\label{fig:cylindrical_notation}
\end{figure}

\begin{proof}[sketch]
Our proof
consists of two steps. 
In Step~1, we restrict considerations to drawings fulfilling two properties, for which we compute a monochromatic plane spanning subgraph using a multi-stage sweep algorithm. 
In Step~2, we show how to handle drawings that do not fulfill all properties from Step~1.

\paragraph{Step 1.} Let $D$ be a 2-edge-colored cylindrical drawing that fulfills the following properties:
\begin{enumerate}[leftmargin=*,label={(P\arabic*)}]
\item\label{p:1} $D$ has inner and outer vertices, and
\item\label{p:34} $D$'s inner and outer cycle are both monochromatic, but of different color.
\end{enumerate} 

Assume without loss of generality that the inner cycle of $D$ is blue and hence the outer cycle is red.
We will refer to them as the blue and red cycle and to the vertices on them as blue and red vertices, respectively.

We use the following algorithm to compute a (bichromatic)
subdrawing~$H$ of $D$ consisting of some side edges of $D$ and their endpoints 
	(cf.\ \cref{fig:cylindrical_algo2}).  
\begin{enumerate}[leftmargin=1.7em,itemindent=3em,start=0,label=\textbf{Phase~{\arabic*}.},ref=Phase~{\arabic*}]	
	\item\label{a:1} Initially, let $H$ be empty. 
	Choose an arbitrary inner vertex 
		as initial \emph{rotation vertex} $v_{\mathrm{cur}}$, 
		set the \emph{rotation direction} to clockwise, and
	set the first side edge of $v_{\mathrm{cur}}$ in the rotation direction as initial \emph{current edge} $e_{\mathrm{cur}}$. 
	\item\label{a:2} 
		We repeat the following process while $e_{\mathrm{cur}}$ is a side edge and  
			while $H$ is still missing vertices from the cycle of $D$ not containing $v_{\mathrm{cur}}$:
			Add $e_{\mathrm{cur}}$ to $H$;
			If $e_{\mathrm{cur}}$ does not have the same color as $v_{\mathrm{cur}}$, set $v_{\mathrm{cur}}$ to be the other endpoint of $e_{\mathrm{cur}}$ and reverse the rotation direction (clockwise $\leftrightarrow$ counterclockwise);
			In any case, set $e_{\mathrm{cur}}$ to be the next edge incident to $v_{\mathrm{cur}}$ after $e_{\mathrm{cur}}$ in the (possibly changed) rotation direction.
	\item\label{a:3} If $H$ contains all vertices of $D$ from the cycle not containing $v_{\mathrm{cur}}$: Return~$H$.
	\item\label{a:4} Otherwise: Set $H_{\mathrm{prev}}=H$, reset $H$ to be empty,  
		reverse 
		the rotation direction,  
		set $e_{\mathrm{cur}}$ to be the first side edge of $v_{\mathrm{cur}}$ in the new rotation direction, and restart with \ref{a:2}.
\end{enumerate} 

\begin{figure}[htb]
	\centering
	\includegraphics[scale=0.6,page=1]{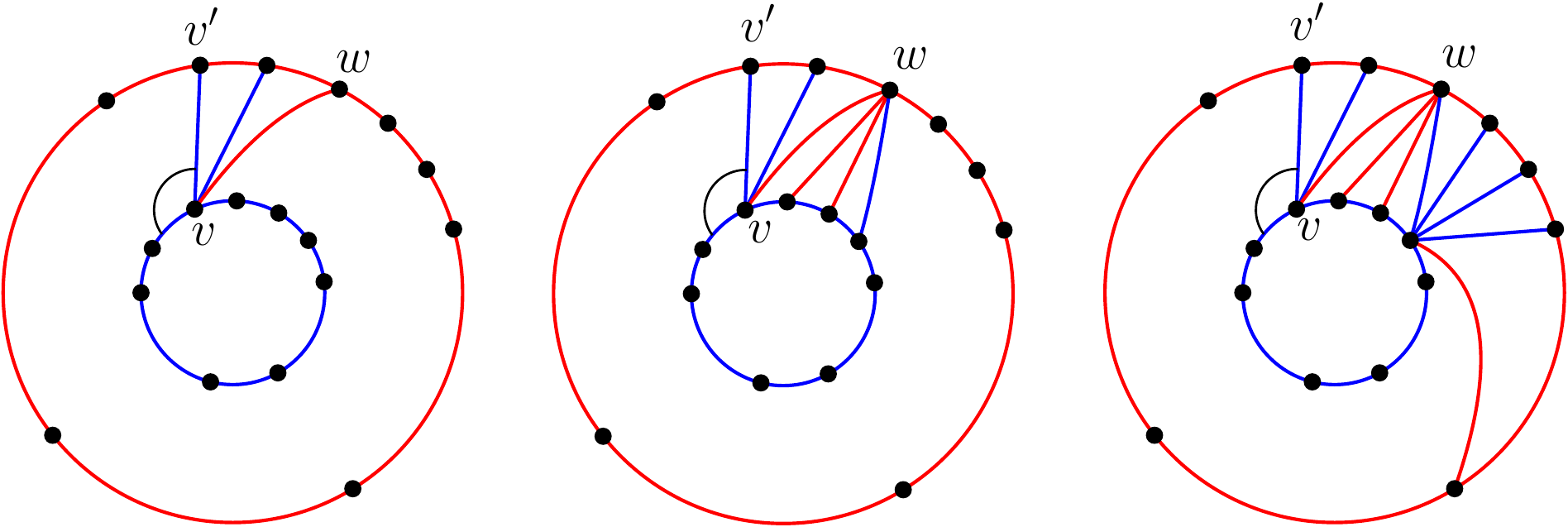}
	\caption{%
		The first steps of the algorithm. The black arc at vertex~$v$ indicates that $vv'$ is the first side edge of $v$ in clockwise order (the initial rotation direction).}
	\label{fig:cylindrical_algo2}
\end{figure}

The following invariants hold for the algorithm (see \cref{app:full_proof_cylindrical} for a proof):
\begin{enumerate}[leftmargin=*,label={(J\arabic*)}]
\item\label{j:1} At any time, the union of $H$ and the two cycles of $D$ forms a plane drawing. 
\item\label{j:2} Any blue (red) vertex in $H$ is incident to a red (blue) edge in $H$, except for the current rotation vertex. 
\item\label{j:4} Assume that \ref{a:2} is performed more than once and let $V(H)$ be the set of vertices of $H$. 
  Then for any $i\geq 2$, after round $i$ of \ref{a:2}, either $V(H)$ is a strict superset of $V(H_{\text{prev}})$ or $H$ contains all vertices from the cycle not containing $v_{\mathrm{cur}}$, the current rotation vertex (or both conditions hold).
\end{enumerate}

Using those invariants, we can now complete Step 1: 
By \ref{j:4}, the algorithm  
terminates. 
And by \ref{j:1} and \ref{j:2}, at least one of the color classes of the union of $H$ and the two cycles of $D$ is a monochromatic plane spanning graph for $D$. 

\paragraph{Step 2.} 
Now assume that $D$ violates
at least one of the properties \ref{p:1} and \ref{p:34}.

If it violates \ref{p:1}, then $D$ is isomorphic to a 2-page book drawing and hence contains a monochromatic plane spanning tree (see Proposition~\ref{prop:2-page-book}).

If $D$  
does not fulfill \ref{p:34}, then we remove vertices whose cycle edges are of different color until we 
reach a subdrawing $D'$ where both cycles are monochromatic, find a plane monochromatic spanning tree on $D'$ by either Step 1 or \cref{obs:span}, and then extend it to a 
monochromatic spanning tree on $D$.~\hfill$\qed$
\end{proof}

\section{Monotone Drawings}\label{sec:n:2_colors_mono}

In this section, we prove the existence of hypochromatic plane
spanning trees in $k$-edge-colored monotone drawings of~$K_n$, 
for $k$ linear in $n$.
\begin{lemma}\label{lem:computation}
Conjecture~\ref{conj:main} holds for any simple drawing of $K_n$ with $n \leq 7$ vertices.
\end{lemma}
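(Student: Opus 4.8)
The plan is to settle this by an exhaustive computer search, since the number of combinatorially distinct simple drawings of $K_n$ is finite for each fixed $n$ and has been completely enumerated for small $n$. First I would dispose of the tiny cases $n\le 4$ by hand: for $n\le 2$ the claim is immediate, for $n=3$ the majority color contains two of the three edges, which (being adjacent) do not cross and hence form a plane spanning path, and $K_4$ can be checked directly. This leaves the genuine cases $n\in\{5,6,7\}$, for which I would load all non-isomorphic simple drawings of $K_n$ from the known database of good drawings~\cite{aafhpprsv-agdsc-15}, together with the combinatorial crossing data, i.e., for every pair of independent edges the bit recording whether they cross. This crossing structure is all that is needed, since whether a given spanning tree is drawn plane depends only on which pairs of its edges cross, not on the concrete Jordan arcs.

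For a fixed drawing $D$ I would iterate over the $2$-edge-colorings. Observing that for $k=2$ a hypochromatic subgraph is exactly a monochromatic one, \cref{obs:span} lets me immediately discard every coloring in which one color class fails to span: such colorings already yield a monochromatic plane spanning tree. For each remaining coloring---where both the red and the blue edges touch all $n$ vertices---I would test each color class separately for a plane spanning tree. Concretely, for the red edges I would search for a crossing-free spanning tree among them; this can be done either by enumerating the spanning trees of the red graph and rejecting any that contains a crossing pair, or, equivalently, by searching for a spanning tree that forms an independent set in the crossing graph whose vertices are the red edges and whose edges join crossing pairs. If either color class admits such a tree, the coloring is settled, and \cref{conj:main} holds for this $n$ precisely when every coloring of every drawing is settled.

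To keep the search tractable I would exploit symmetry: colorings can be enumerated up to the automorphism group of $D$ and up to the global red$\leftrightarrow$blue swap, and the branch-and-bound for the plane spanning tree can prune as soon as a partial tree contains a crossing or can no longer be completed to a spanning tree. The main obstacle is not mathematical depth but trust in the computation: I must rely on the completeness and correctness of the enumeration of simple drawings of $K_7$ (the largest and by far most numerous case), and the combinatorial blow-up there---the product of the number of drawings with up to $2^{\binom{7}{2}}=2^{21}$ colorings---is the real bottleneck. I would therefore argue feasibility through the symmetry and pruning reductions above, and cross-check the verdict for $n\in\{5,6\}$ against an independent implementation, so that the finite case analysis can be accepted with confidence.
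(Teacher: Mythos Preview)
Your proposal is correct and follows essentially the same approach as the paper: hand-check $n\le 4$, then for $n\in\{5,6,7\}$ exhaustively enumerate the weak isomorphism classes of simple drawings of $K_n$ from the database in~\cite{aafhpprsv-agdsc-15} and verify all $2$-edge-colorings. The paper's argument is considerably terser---it simply asserts the computation was done---whereas you supply implementation detail (the use of \cref{obs:span} as a filter, symmetry reductions, pruning); but the underlying strategy is identical.
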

For $n \leq 4$ this can easily be observed by hand.
For $n=5,\ldots ,7$ we considered all weak isomorphism classes\footnote{Two simple drawings of $K_n$ are weakly isomorphic iff they have the same crossing edge pairs.} of simple drawings of $K_n$~\cite{aafhpprsv-agdsc-15} and checked for all possible 2-edge colorings that there exists a monochromatic plane spanning tree.
Computations for $n=8$ are currently out of reach, 
as there are 5,370,725 weak isomorphism classes of simple drawings~\cite{aafhpprsv-agdsc-15} and more than $10^8$ possible colorings for each of them.

\begin{proof}[of \cref{thm:hypon2}]
  Let $d\ge 2$ be an integer constant, and let
  $k=\lceil(n+d-1)/d\rceil=\lceil(n-1)/d\rceil+1$. The argument works
  for any $d$ so that
  \cref{conj:main} holds for all monotone drawings on up to $d+1$
  vertices.

  Consider a $k$-edge-colored monotone drawing
  $D$ of~$K_n$, and let $v_0, v_1, \ldots, v_{n-1}$ denote the sequence
  of
  vertices in increasing $x$-order. We partition the vertices into
  $k-1$ groups $G_0,\ldots,G_{k-2}$ of size at most $d+1$ by
  setting $G_i=(v_{di},v_{di+1},\ldots,v_{di+d})$. (The last group may
  have less than $d+1$ vertices.) Observe that
  $G_i\cap G_{i+1}=\{v_{d(i+1)}\}$.

  We proceed in two phases. In both phases we consider each group
  separately. At the end of the first phase, we choose which color to
  remove. At the end of the second phase, we have an induced plane
  spanning tree $T_i$ for $G_i$ that avoids the chosen color, for each
  $i\in\{0,\ldots k-2\}$. As $D$ is monotone, the union
  $\bigcup_{i=0}^{k-2}T_i$ forms a hypochromatic plane
  spanning tree in $D$.

  In the first phase, we consider each group $G_i$, and check whether
  it has a monochromatic plane spanning tree in some color $c$. If so,
  we put $c$ in a set $S$ of colors to keep. If not, then by
  \cref{conj:main} (which we assume to hold for $G_i$, as $G_i$ has at
  most $d+1$ vertices) we can remove \emph{any} single color and still
  find a monochromatic plane spanning tree in $G_i$. (If $c$ is the
  color to be removed, then consider the bicoloring where all colors
  other than $c$ are merged into a single second color.) As
  $|S|\le k-1$, we can choose a color not in $S$ to be removed
  at the end of the first phase.

  In the second phase, for each group $G_i$ we either select a
  monochromatic plane spanning tree (if it exists), or find a plane
  spanning tree that avoids the chosen color.
  
  To obtain the statement of \cref{thm:hypon2}, we use the result of Lemma~\ref{lem:computation}.~\hfill$\qed$
\end{proof}

\section{Open Problems}\label{sec:open}

Besides resolving the conjectures in full generality, it would be interesting to prove them for other specific classes of drawings (e.g., monotone).  A useful step in this direction would be to expand the range of $k$ for which \cref{conj:gen} holds.

\bibliography{plane_trees_colored_edges}

\begin{thebibliography}{10}
\providecommand{\url}[1]{\texttt{#1}}
\providecommand{\urlprefix}{URL }
\providecommand{\doi}[1]{https://doi.org/#1}

\bibitem{aafhpprsv-agdsc-15}
\'Abrego, B., Aichholzer, O., Fern\'andez-Merchant, S., Hackl, T., Pammer, J.,
  Pilz, A., Ramos, P., Salazar, G., Vogtenhuber, B.: All good drawings of small
  complete graphs. In: Abstracts $31^{st}$ European Workshop on Computational
  Geometry (EuroCG'15). pp. 57--60 (2015)

\bibitem{complete_biparite_spanning}
Aichholzer, O., Garc\'ia, A., Parada, I., Vogtenhuber, B., Weinberger, A.:
  Simple drawings of~{$K_{m,n}$} contain shooting stars. In: Abstracts
  $35^{th}$ European Workshop on Computational Geometry (EuroCG'20). pp.
  36:1--36:7 (2020)

\bibitem{bernhart1979book}
Bernhart, F., Kainen, P.C.: The book thickness of a graph. Journal of
  Combinatorial Theory, Series B  \textbf{27}(3),  320--331 (1979).
  \doi{10.1016/0095-8956(79)90021-2}

\bibitem{biniaz_garcia}
Biniaz, A., Garc{\'{\i}}a, A.: Partitions of complete geometric graphs into
  plane trees. Computational Geometry  \textbf{90},  101653 (2020).
  \doi{10.1016/j.comgeo.2020.101653}

\bibitem{bose2006partitions}
Bose, P., Hurtado, F., Rivera-Campo, E., Wood, D.R.: Partitions of complete
  geometric graphs into plane trees. Computational Geometry  \textbf{34}(2),
  116--125 (2006). \doi{10.1016/j.comgeo.2005.08.006}

\bibitem{brualdi1996multicolored}
Brualdi, R.A., Hollingsworth, S.: Multicolored trees in complete graphs.
  Journal of Combinatorial Theory, Series B  \textbf{68}(2),  310--313 (1996).
  \doi{10.1006/jctb.1996.0071}

\bibitem{EG_1973}
Erd\H{o}s, P., Guy, R.: Crossing number problems. The American Mathematical
  Monthly  \textbf{88},  52--58 (1973)

\bibitem{erdos1983some}
Erd{\H o}s, P., Ne{\v{s}}etril, J., R{\"o}dl, V.: Some problems related to
  partitions of edges of a graph. Graphs and other combinatorial topics,
  Teubner, Leipzig  \textbf{5463} (1983)

\bibitem{euclidean_pseudo}
Goodman, J.E.: Proof of a conjecture of {Burr}, {Gr\"unbaum}, and {Sloane}.
  Discrete Mathematics  \textbf{32}(1),  27--35 (1980).
  \doi{10.1016/0012-365X(80)90096-5}

\bibitem{straight_lines}
K{\'{a}}rolyi, G., Pach, J., T{\'{o}}th, G.: Ramsey-type results for geometric
  graphs, {I}. Discrete {\&} Computational Geometry  \textbf{18},  247--255
  (1997). \doi{10.1007/PL00009317}

\bibitem{keller2013blockers}
Keller, C., Perles, M.A., Rivera-Campo, E., Urrutia-Galicia, V.: Blockers for
  noncrossing spanning trees in complete geometric graphs. In: Thirty Essays on
  Geometric Graph Theory, pp. 383--397. Springer (2013).
  \doi{10.1007/978-1-4614-0110-0\_20}

\bibitem{kyncl2009}
Kyn\v{c}l, J.: Enumeration of simple complete topological graphs. European
  Journal of Combinatorics  \textbf{30},  1676--1685 (2009).
  \doi{10.1016/j.ejc.2009.03.005}

\bibitem{rafla}
Rafla, N.H.: The good drawings {$D_n$} of the complete graph {$K_n$}. Ph.D.
  thesis, McGill University, Montreal (1988)

\bibitem{rivera2013sufficient}
Rivera-Campo, E., Urrutia-Galicia, V.: A sufficient condition for the existence
  of plane spanning trees on geometric graphs. Computational Geometry
  \textbf{46}(1), ~1--6 (2013). \doi{10.1016/j.comgeo.2012.02.006}

\bibitem{schaefer2017}
Schaefer, M.: The graph crossing number and its variants: A survey. Electronic
  Journal of Combinatorics, Dynamic Survey  \textbf{21}(4) (2020).
  \doi{10.37236/2713}

\end{thebibliography}

\appendix

\section{Preliminary Results}\label{app:preliminary_results}

\begin{obs}\label{obs:2_uncrossed}
Let $D$ be a 2-edge-colored simple drawing of $K_n$ and $v$ be a vertex incident to (monochromatically) uncrossed edges of both color classes. If $D \setminus \{ v\}$ contains a monochromatic plane spanning tree, then so does $D$.
\end{obs}

Using this observation, it is not hard to see that any 2-page book drawing contains a monochromatic plane spanning tree.

\thmTwoPageBook*

\begin{proof}
	Let $D$ be a 2-edge-colored $2$-page book drawing of~$K_n$ (with colors red and blue). As long as $D$ contains vertices that are incident to at least one (monochromatically) uncrossed blue edge and at least one (monochromatically) uncrossed red edge, iteratively remove these vertices to obtain a subdrawing $D'$. Clearly, $D'$ remains a $2$-page book drawing.
	
	Label the vertices in $D'$ with $v_1, v_2, .., v_j$ from left to right. By the properties of $2$-page book drawings, the edges $v_{i}v_{i+1}$ between consecutive vertices are uncrossed. Since no vertex is adjacent to two differently colored, uncrossed edges, the path $v_1\ldots v_j$ is monochromatic and thus forms a plane monochromatic spanning tree for $D'$.
	
	Finally, using \cref{obs:2_uncrossed}, re-add the previously removed vertices in inverse order to obtain a plane monochromatic spanning tree of $D$.~\hfill$\qed$
\end{proof}

\thmPseudolinear*

The proof for the straight line case in~\cite{straight_lines} uses two concepts: 
The existence of a convex hull and the monotonicity of all edges. 
We will observe that pseudolinear drawings fulfill both and then follow the lines of the straight-line proof.

\begin{proof}
	The proof goes by induction on the number $n$ of vertices. 
	As induction base let~$n=2$. 
	Then there is a plane monochromatic spanning tree consisting of the only edge in the drawing. 
	So assume that any pseudolinear drawing of~$K_{n-1}$ contains a plane monochromatic spanning tree. 
	
	For the induction step we consider a $2$-colored pseudolinear drawing of~$K_n$ and call it~$D$. 
	We consider first the case that there exists a vertex~$v$ that is incident to an uncrossed red and an uncrossed blue edge. 
	Then the subdrawing $D\setminus \{v\}$ contains a plane monochromatic spanning tree by our induction hypothesis.  
	Thus, the drawing~$D$ contains a plane monochromatic spanning tree by \cref{obs:2_uncrossed}. 
	
	So let us assume that~$D$ does not contain any vertex that is incident to two differently colored crossing-free edges. 
	To prove that~$D$ contains a plane monochromatic spanning tree in this case, we will use 
	the following well known fact (whose proof we include for the sake of self-containment).
	
	\begin{claim}
		The outermost edges of any pseudolinear drawing of~$K_n$ form an uncrossed cycle.
	\end{claim}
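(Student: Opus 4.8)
The plan is to exploit the pseudoline arrangement underlying the drawing. First I would extend every edge $uv$ to a bi-infinite Jordan arc $\ell_{uv}$, as permitted by the definition of a pseudolinear drawing, so that the $\binom{n}{2}$ extended arcs pairwise cross exactly once; this is a pseudoline arrangement. Using the monotonicity property highlighted above (every pseudoline arrangement can be realized by $x$-monotone curves, i.e.\ as a wiring diagram), I would assume without loss of generality that all $\ell_{uv}$ are $x$-monotone, and after a generic perturbation that no three are concurrent. I then call an edge $uv$ \emph{outermost} exactly when all remaining $n-2$ vertices lie (weakly) on one side of $\ell_{uv}$; this is the natural pseudoline analogue of a convex-hull edge, and one first checks that it coincides with being ``outermost'' in the drawing.

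The easy half is that outermost edges are uncrossed. Since any pseudoline crosses $\ell_{uv}$ exactly once, it is split by that crossing into one part on each side of $\ell_{uv}$. If $uv$ is outermost, then both endpoints of any other (independent) edge $f$ lie on the same side of $\ell_{uv}$, hence they lie in the same one of the two parts of $\ell_f$, so the subarc $f$ stays on that side and cannot meet the subarc $uv\subset\ell_{uv}$. Thus no edge crosses an outermost edge; in particular outermost edges do not cross one another.

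The substantial half is showing that the outermost edges form a single cycle, and here I would run a gift-wrapping (rotation) argument using the convex-hull analogue. Let $a$ be the leftmost and $b$ the rightmost vertex (unique by monotonicity after the perturbation). I would build an \emph{upper chain} starting at $a$: among the edges leaving $a$, select the one whose $x$-monotone pseudoline has all other vertices below it; from its other endpoint repeat, each time rotating to the next edge that keeps all remaining vertices below, until reaching $b$. A symmetric \emph{lower chain}, with all vertices above, runs from $a$ to $b$. The points to verify are that at each step such an edge exists and is unique, that the chain strictly advances in $x$ and hence cannot close prematurely but must terminate at $b$, and that every outermost edge lies on exactly one of the two chains. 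Concatenating the upper and lower chains then yields a single closed uncrossed cycle through all hull vertices.

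The main obstacle I expect is precisely this cycle step: making gift-wrapping rigorous in the purely topological pseudoline setting, in particular proving that each hull vertex is incident to exactly two outermost edges and that the rotation advances monotonically in $x$, so that the two chains meet exactly at $a$ and $b$ and close up without gaps or branching. By contrast, the uncrossedness argument and the reduction to an $x$-monotone arrangement are comparatively routine.
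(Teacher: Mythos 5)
Your sidedness argument for uncrossedness is correct and complete, and it is in fact a different (and arguably cleaner) argument than the paper's: the paper proceeds by contradiction, noting that if a boundary edge $e_1$ is crossed by an edge $e_2$, then in the $K_4$ induced by their four endpoints the remaining four edges each meet the extensions $\ell_1,\ell_2$ only at shared vertices, hence form a $4$-cycle enclosing both $e_1$ and $e_2$, contradicting that $e_1$ touches the outer face. However, your proposal has two genuine gaps. First, the claim (and the way it is used afterwards, to deduce that the boundary cycle is monochromatic) concerns the \emph{outermost edges of the drawing}, i.e., the edges lying partly on the boundary of the outer face. You instead work with the pseudoline analogue of convex-hull edges (all other vertices on one side of $\ell_{uv}$) and dispose of the identification of the two notions with ``one first checks that it coincides''. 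The direction ``hull edge $\Rightarrow$ boundary edge'' is indeed easy, but the converse --- that an edge on the boundary of the drawing has all vertices on one side of its extension --- is essentially the substance of the claim and is nowhere proved; without it, your result does not yet say anything about the edges the paper needs. Second, the cycle half of your argument is a plan rather than a proof: you yourself list the unverified steps (existence and uniqueness of the gift-wrapping successor, strict advancement in $x$, every hull edge lying on exactly one chain), and these verifications, in the purely topological setting, are exactly where the difficulty lies.

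It is worth noting that the paper avoids the gift-wrapping problem entirely, and that your main obstacle is largely created by your choice of definition. By defining outermost edges as those on the boundary of the drawing, the cycle structure comes almost for free: the boundary of the outer face of a connected drawing is a closed curve built from edge arcs, and once no boundary edge is crossed, these arcs are whole edges forming a cycle. The entire content of the claim then reduces to the uncrossedness statement, which the paper settles with the $K_4$ enclosure argument above. If you replace your hull-based definition with the boundary-based one, your sidedness argument can no longer be applied verbatim (it presupposes the hull property), but the paper's contradiction argument applies directly, and the upper/lower-chain machinery you flag as the main obstacle becomes unnecessary.
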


	\begin{claimproof}
		Assume, for contradiction, there are edges~$e_1=u_1v_1$,~$e_2=u_2v_2$ that lie partly on the boundary and cross each other. 
		Let~$H$ be the subdrawing induced by $\{u_1,u_2,v_1,v_2\}$. 
		Let~$e_1$ be extended by pseudoline~$\ell_1$ and~$e_2$ be extended by~$\ell_2$. 
		The pseudolines~$\ell_1$ and~$\ell_2$ are intersected by all edges of~$H \setminus \{e_1,e_2\}$ in one of the vertices of~$H$. 
		Thus, they cannot have a crossing point with any edges of~$H \setminus \{e_1,e_2\}$ in the interior of that edge.  
		Thus, every edge other than~$e_1$ and~$e_2$ has to stay completely on one side of each of $\ell_1$ and $\ell_2$, respectively. 
		The edges together form a cycle that completely encloses~$e_1$ and~$e_2$; see Figure~\ref{fig:pseudoline}. 
		This means in particular that neither~$e_1$ nor~$e_2$ can lie (partly) on the boundary of the drawing.
		\begin{figure}[htb]
			\centering
			\includegraphics[scale=0.6,page=1]{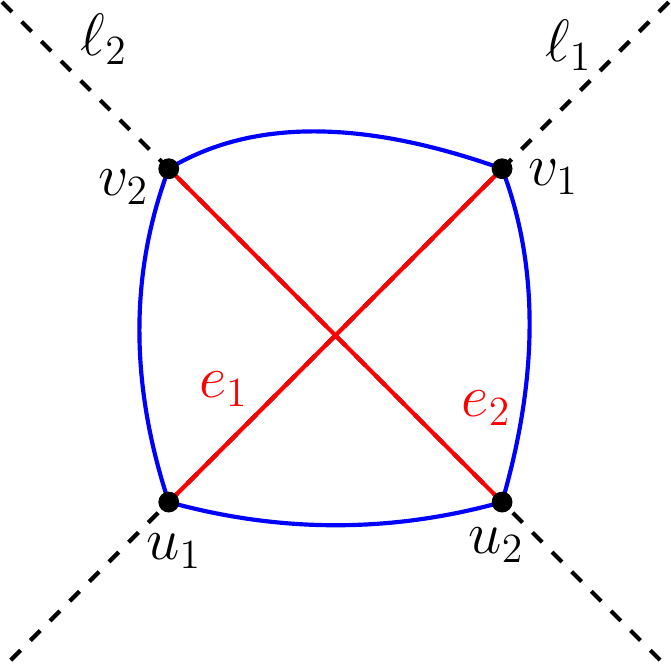}
			\caption{The edges~$e_1$ and~$e_2$ cross and are extended by the (black, dashed) pseudolines~$\ell_1$ and~$\ell_2$. 
			The (blue) edges that are in the same~$K_4$ are forced by~$\ell_1$ and~$\ell_2$ to stay on one side of the crossing.}
			\label{fig:pseudoline}
		\end{figure}	
	\end{claimproof}
	\vspace{2ex}
	
	Since~$D$ does not contain any vertices that are incident to crossing-free edges colored in different colors, 
	it follows that the boundary cycle of~$D$ is monochromatic.
	Assume without loss of generality that the boundary cycle of~$D$ is red. 
	If all vertices lie on the boundary, the boundary edges form a plane red spanning tree and we are done.
	Otherwise there exists at least one interior vertex. 
	Since by~\cite{euclidean_pseudo}, every pseudoline arrangement is isomorphic to a pseudoline arrangement in which every pseudoline is $x$-monotone,
	we can assume that our pseudolinear drawing is $x$-monotone. 
	This implies that there are at least two more uncrossed edges:
	One uncrossed edge is incident to the leftmost vertex and the leftmost vertex that is not on the boundary; 
	another uncrossed edge is incident to the rightmost vertex and the rightmost vertex that is not on the boundary. 
	Both edges have to be red, because~$D$ does not contain any vertices that are adjacent to two differently colored uncrossed edges.

	By the assumption that our pseudolinear drawing is $x$-monotone we can label the vertices~$x_1, x_2,...,x_n$ in $x$-monotone order. 
	By our induction hypothesis, the subdrawings induced by~$x_1, x_2,...,x_i$ and by~$x_i,x_{i+1},....,x_n$ contain 
	plane monochromatic spanning trees for any $i\in \{2,...,n-1\}$. 
	Let~$T_i^l$ be the plane monochromatic spanning tree of the subdrawing induced by~$x_1, x_2,...,x_i$ 
	and~$T_i^r$ the plane monochromatic spanning tree subdrawing induced by~$x_i,x_{i+1}...,x_n$ . 
	If both of them have the same color, then~$T_i^l \bigcup T_i^r$ forms a plane monochromatic spanning tree for the whole drawing. 
	So assume that they have different colors.
	
	We know from the color of the first and the last edge that $T_1^l$ and $T_{n-1}^r$ are red. 
	Thus there has to be an~$i$ for which~$T_i^l$ is red and $T_{i+1}^r$ is red as well. 
	If the edge~$x_ix_{i+1}$ is red, we can use it to connect the two spanning trees. 
	If the edge is blue, it is not part of the boundary cycle. 
	We can use the boundary edge above or the boundary edge below~$x_ix_{i+1}$ to connect the two spanning trees.~\hfill$\qed$
\end{proof}

\obsNonSpanning*

\begin{proof} 
	Assume, without loss of generality, that the edges of the red color class contain no spanning tree (not even a crossing one). 
	Then the subdrawing induced by the red edges has at least two different components. 
	Let~$A$ be the vertex set of one of those components and let~$B$ be the vertices that are not in~$A$. 
	There are no red edges between~$A$ and~$B$. 
	This means that the subdrawing induced by the remaining edges contains a complete bipartite graph with sides of the partition~$A$ and~$B$. 
	Every complete bipartite graph contains a plane spanning tree~\cite{complete_biparite_spanning}. 
	Thus~$D$ contains a plane hypochromatic spanning tree (consisting of only non-red edges).~\hfill$\qed$
\end{proof}

\section{Full proof of \cref{thm:cylindrical}}\label{app:full_proof_cylindrical}

\thmCylindrical*

\begin{proof}
Our proof consists of two steps. 
In Step~1, we restrict considerations to drawings fulfilling two properties, for which we compute a monochromatic plane spanning subdrawing using a multi-stage sweep algorithm.
In Step~2, we show how to handle drawings that do not fulfill all properties from Step~1.

\paragraph{Step 1.} Let $D$ be a 2-edge-colored cylindrical drawing that fulfills the following properties:
\begin{enumerate}[leftmargin=*,label={(P\arabic*)}]
	\item[\ref{p:1}] $D$ has inner and outer vertices, and
	\item[\ref{p:34}] $D$'s inner and outer cycle are both monochromatic, but of different color.
\end{enumerate} 

To simplify the description, we assume without loss of generality that the inner cycle of $D$ is blue and hence the outer cycle is red.
We will refer to them as the blue and red cycle and to the vertices on them as blue and red vertices, respectively. 
We remark that, if there are less than three vertices on a cycle, then the cycle is in fact not a cycle in the graph-theoretic meaning, as it has at most one edge. Moreover, if a cycle has only one vertex and hence does not have any edges, we can assume it to be of any color.

We use the following algorithm to compute a (possibly bichromatic) 
subdrawing~$H$ of $D$ consisting of a subset of side edges of $D$ and their endpoints (cf.~\cref{fig:cylindrical_algo}). 

\begin{enumerate}[leftmargin=1.7em,itemindent=3em,start=0,label=Phase~{\arabic*}.,ref=Phase~{\arabic*}]
	\item[\textbf{\ref{a:1}.}] 
		Initially, let $H$ be empty. 
		Choose an arbitrary inner vertex 
		as initial \emph{rotation vertex} $v_{\mathrm{cur}}$, 
		set the \emph{rotation direction} to clockwise, and
	set the first side edge of $v_{\mathrm{cur}}$ in the rotation direction as initial \emph{current edge} $e_{\mathrm{cur}}$. 
\item[\textbf{\ref{a:2}.}] 
		We repeat the following process while $e_{\mathrm{cur}}$ is a side edge and 
			while $H$ is still missing vertices from the cycle of $D$ not containing $v_{\mathrm{cur}}$:
			Add $e_{\mathrm{cur}}$ to $H$;
			If $e_{\mathrm{cur}}$ does not have the same color as $v_{\mathrm{cur}}$, set $v_{\mathrm{cur}}$ to be the other endpoint of $e_{\mathrm{cur}}$ and reverse the rotation direction (clockwise $\leftrightarrow$ counterclockwise);
			In any case, set $e_{\mathrm{cur}}$ to be the next edge incident to $v_{\mathrm{cur}}$ after $e_{\mathrm{cur}}$ in the (possibly changed) rotation direction.		
	\item[\textbf{\ref{a:3}.}] If $H$ contains all vertices of $D$ from the cycle not containing $v_{\mathrm{cur}}$: Return~$H$. 
	\item[\textbf{\ref{a:4}.}] Otherwise: Set $H_{\mathrm{prev}}=H$, reset $H$ to be empty, 
		reverse the rotation direction,
		set $e_{\mathrm{cur}}$ to be the first side edge of $v_{\mathrm{cur}}$ in the new rotation direction, and restart with \ref{a:2}.
\end{enumerate}

 \begin{figure}[htb]
	\centering
	\includegraphics[scale=0.6,page=1]{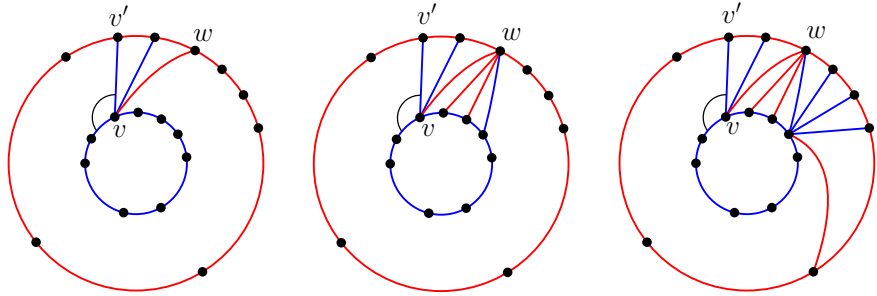}
	\caption{
	 The first steps of our algorithm. The black arc at vertex~$v$ indicates that $vv'$ is the first side edge of $v$ in clockwise order (the first rotation direction). }
	\label{fig:cylindrical_algo}
\end{figure}

Intuitively speaking, this algorithm sweeps back and forth in a zig-zag manner (see \figurename~\ref{fig:cylindrical_algo} for an illustration). We remark that \ref{a:2} adds at least one edge to $H$, namely $e_{\mathrm{cur}} = vv'$ as set in \ref{a:1}. Moreover, the \emph{\caterpillar}~$H$ constructed in \ref{a:2} of the algorithm consists of a main path (also called \emph{backbone path}) of alternating red and blue edges corresponding to the switches between the two cycles, i.e., each vertex along the backbone path has been a rotation vertex. Additionally, each vertex of this backbone path may have an arbitrary number of monochromatic leaves attached. This graph structure is called \emph{caterpillar}. 

An illustration of \ref{a:2} in reverse direction can be found in \figurename~\ref{fig:cylindrical_revert}. As we will see later, at least some edges causing a switch of cycles will differ from the previous backbone edges.
However, the first edges in the reverse process (until the first switch) are the same as the last edges of the previous iteration.

Of course, the graph $H$ (returned in \ref{a:3}) is not the plane monochromatic spanning tree we are looking for. But we claim that either the red cycle together with the red edges of $H$ or the blue cycle together with the blue edges of $H$ forms a plane monochromatic spanning subdrawing of $D$. 

To prove this and thereby the correctness of our algorithm, we need the following invariants concerning the \caterpillar\ $H$.

\begin{enumerate}[leftmargin=*,label={(J\arabic*)}]
\item[\ref{j:1}] At any time, the union of $H$ and the two cycles of $D$ forms a plane drawing. 
\item[\ref{j:2}] Any blue (red) vertex in $H$ is incident to a red (blue) edge in $H$, except for the current rotation vertex.
\item[\ref{j:4}] Assume that \ref{a:2} is performed more than once and let $V(H)$ be the set of vertices of $H$. 
  Then for any $i\geq 2$, after round $i$ of \ref{a:2}, either $V(H)$ is a strict superset of $V(H_{\text{prev}})$ or $H$ contains all vertices from the cycle not containing $v_{\mathrm{cur}}$, the current rotation vertex (or both conditions hold).
\end{enumerate}

Before showing that the invariants \ref{j:1} -- \ref{j:4} indeed hold,
we first show how to obtain a plane monochromatic spanning subdrawing from the output of our algorithm under the assumption that \ref{j:1} -- \ref{j:4} are true.
Invariant \ref{j:4} guarantees the termination of our algorithm. 
Further, by \ref{j:1} and \ref{j:2}, it follows that the union of the result $H$ of the algorithm and the two cycles contains a monochromatic plane spanning subdrawing.  
Indeed, let $H$ be the output of our algorithm. 
Then $V(H)$ contains all vertices of the cycle that does not contain the last rotation vertex.
Assume first that this cycle is blue. 
As by \ref{j:2}, all blue vertices are incident to a red edge in $H$, the red cycle together with the red edges in $H$ forms a spanning subdrawing in $D$, which, by \ref{j:1}, is plane.
Analogously, if the cycle not containing the last rotation vertex is red, then $V(H)$ contains all vertices of the red cycle, each of which is incident to a blue edge by \ref{j:2}.
Hence, the blue cycle and the blue edges in $H$ form a plane spanning subdrawing of $D$.

\paragraph{Proving the invariants.}

Invariants \ref{j:1} and \ref{j:2} follow quite straightforwardly from the construction, whereas \ref{j:4} is more involved. 
Recall that in the rotation of any vertex $v$, all side edges incident to $v$ appear consecutively.
Moreover, we state the following observation, which will be useful for proving \ref{j:1} and \ref{j:4}.

\begin{obs}\label{obs:order}
	In the rotation of any vertex $v$, the order of edges to the vertices of each circle is the same as the order along that circle. In particular, if $v_1,\ldots,v_k$ are all vertices on the circle not containing $v$ in circular order, then there exists a $1\leq j \leq k$ such that $v_j,v_{j+1},\ldots v_k, v_1, \ldots v_{j-1}$ appear in that order in the rotation around $v$.
\end{obs}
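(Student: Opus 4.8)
The plan is to reduce the statement to a single classical planarity fact: if finitely many simple arcs emanate from a common point $v$, are pairwise disjoint except at $v$, and each ends at a distinct point on the boundary of a disk that contains them, then the cyclic order in which the arcs leave $v$ coincides with the cyclic order of their endpoints along that boundary. (This is the standard fact that a non-crossing fan cannot reverse the order of its endpoints: a single arc from $v$ is a crosscut splitting the disk in two, any further arc from $v$ must lie on one side and so ends on the corresponding boundary arc, and iterating pins down the full order.) Both halves of \cref{obs:order} are instances of this fact; the only work is to exhibit, in each case, the disk in which the relevant edges live and to check that ``cyclic order of arcs at $v$'' is exactly the rotation of $v$ restricted to those edges. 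Recall also that the paper already records that the side edges occupy a contiguous block of the rotation, separated from the same-circle edges by the two cycle edges, so that a cyclic order will translate into a linear reading with a shift.

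For the circle through $v$, I would use that all edges from $v$ to other vertices of that circle are inner edges (if $v$ is an inner vertex) or outer edges (if $v$ is outer). By definition of a cylindrical drawing these lie in the closed disk bounded by the inner circle, respectively in the exterior of the outer circle, which after compactifying is again a closed disk with $v$ on its boundary. In either case we are exactly in the setting of the fan fact, and the conclusion is immediate: the rotation around $v$ lists these edges in the circular order of their far endpoints.

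For the circle not containing $v$ --- the side edges, which is the part phrased in the ``in particular'' --- the edges live in the closed annulus between the two circles, where arcs could in principle wind around. My plan is to cut the annulus into a disk and reduce to the previous case. The side edges incident to $v$ form a non-crossing fan; I would pick the extreme edge $\alpha$ in the rotation and run a cut curve $\gamma$ \emph{alongside} $\alpha$, from a point of the inner circle just past $\alpha$ to a point of the outer circle just past the endpoint of $\alpha$. Since $\gamma$ hugs $\alpha$ and $\alpha$ crosses none of the other side edges of $v$, the curve $\gamma$ meets none of them; cutting the annulus along $\gamma$ yields a disk in which all side edges of $v$ survive intact as a non-crossing fan from $v$. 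Applying the fan fact in this disk, the rotation order of the side edges equals the order of their endpoints read along the cut-open outer circle, that is, the circular order started at the cut point --- which is exactly the cyclic shift by some $j$ claimed in the statement.

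The step I expect to need the most care is this last reduction, rather than any genuinely hard idea. I must argue that a suitable cut $\gamma$ exists and, crucially, that cutting does not disturb the rotation at $v$, so that ``order of arcs around $v$ in the cut disk'' really is the rotation of $v$. The hugging construction is designed precisely so that any winding of the side edges is unwound by the cut and never affects the conclusion, but making ``hug $\alpha$ without entering a small neighborhood of $v$'' fully rigorous is the fiddly point. Finally, I would dispose separately of the degenerate cases already flagged in the paper, where a circle carries fewer than three vertices and the asserted ``order'' is vacuous or trivial.
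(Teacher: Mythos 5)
Your proof is sound, but note that the paper never actually proves this observation: it is stated without proof, treated as evident from the definition of a cylindrical drawing, and immediately put to use for invariants (J1) and (J3). So there is no paper argument to match against; your proposal supplies the justification the paper leaves implicit, and it does so correctly. Your skeleton --- same-circle edges handled by the non-crossing fan fact in the closed disk bounded by the inner circle (respectively the compactified exterior of the outer circle, with $v$ on its boundary), and side edges handled by cutting the annulus open so that the same fact applies --- is exactly the right reduction, and the degenerate cases with few vertices on a circle are rightly set aside. Two refinements would tighten it. First, the fan fact as you sketch it (``a single arc from $v$ is a crosscut'') presupposes that $v$ lies on the boundary of the disk; this is indeed the case in both of your applications, so the fact should be stated in that form. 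Second, the step you yourself flag as fiddly --- building a curve $\gamma$ that hugs the extreme side edge $\alpha$ while staying clear of the other edges near $v$ --- can be avoided entirely by cutting the annulus along $\alpha$ itself. In the resulting disk, $v$ appears as two boundary points $v^+$ and $v^-$; because $\alpha$ is extreme in the rotation of $v$, the germs at $v$ of all remaining side edges lie on one side of $\alpha$, so their lifts all emanate from the same copy $v^+$, and their far endpoints lie on the cut-open copy of the other circle, a single boundary arc traversing that circle once, starting at the endpoint $w$ of $\alpha$. Applying the fan fact at $v^+$ in this disk yields precisely the cyclic shift claimed in the statement, with $v_j=w$, and no metric ``hugging'' estimates are needed.
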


\paragraph{{\bf (J1).}} Observation~\ref{obs:order} together with the fact that incident edges must not intersect and we stop as soon as we reach an edge incident to $v$ or $v'$ implies \ref{j:1} (remember, $v$ and $v'$ are the incident vertices of the very first edge of \ref{a:2}).

\paragraph{{\bf (J2).}} All leaves that are attached to the backbone path fulfill \ref{j:2} by construction. Concerning, the vertices on the backbone path, we only switch cycles when reaching an edge of color different than the current rotation vertex. Hence, all but the last rotation vertex fulfill \ref{j:2}.

\paragraph{{\bf (J3).}} Let $i\geq 2$ and consider iteration $i$ of \ref{a:2}. Let $H$ be the \caterpillar\ at the end of this current iteration and $H_{\text{prev}}$ the one at the end of the previous iteration. Let $v$ be the first and let $x$ be the last rotation vertex of the previous iteration, i.e., $x$ is the first rotation vertex of the current iteration. Let $z$ denote the last rotation vertex of the current iteration. 

Then, we need to show that $H$ covers all vertices from the cycle not containing~$z$ or $V(H)$ is a strict superset of $V(H_{\text{prev}})$. To this end, we consider the following cases depending on the relative position of $v$ and $x$. 

\begin{description}
	\item[Case 1: $\mathbf{v}$ and~$\mathbf{x}$ lie on the same cycle.]\
	
	 In this case (the first and the last rotation vertex lie on the same cycle) we argue that our algorithm in fact covered all vertices from the other cycle, i.e., is already finished before triggering a new iteration.  
	
	Without loss of generality, let $v$ and $x$ be blue vertices and assume there is a red vertex $y$ that has not been covered by $H_{\text{prev}}$. By Observation~\ref{obs:order}, this vertex must lie \enquote{behind} the already considered vertices on the red cycle. If $v$ is equal to $x$, i.e., we considered only a single rotation vertex, we covered all red vertices. Otherwise, the edges $vy$ and $xy$ must intersect (see \cref{fig:cylindrical_stuck_same}), which is not possible in a simple drawing.\\
	
	  \begin{figure}[htb]
		\centering
		\includegraphics[scale=0.6,page=2]{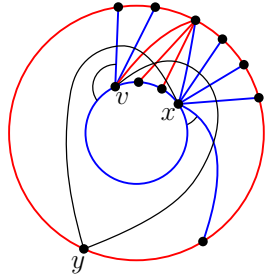}
		\caption{The black arcs around~$v$ and~$x$ indicate that there are no edges incident to~$x$ (resp.~$v$) in this direction. This forces the black edges $yv$ and $yx$ to intersect, which is forbidden in a simple drawing. }
		\label{fig:cylindrical_stuck_same}
	\end{figure}

	\item[Case 2: $\mathbf{v}$ and~$\mathbf{x}$ lie on different cycles.]\
	
	This is the more interesting case, that indeed triggers a new iteration of our algorithm in the reverse direction.
	
	Assume, without loss of generality, that $v$ is a blue vertex and $x$ is red (i.e., the previous iteration started on the blue cycle and the current iteration on the red cycle). The argument of Case~1 of course also applies to the current iteration and hence, we can safely assume $z$ to be a blue vertex.

Remember that every blue vertex $u \in V(H_{\text{prev}})$ (of the previous iteration) is incident to a red edge $uu_r \in E(H_{\text{prev}})$ (due to \ref{j:2}). Then, the following observation turns out to be very helpful.

\begin{claim}
When rotating around a blue vertex $u \in V(H_{\text{prev}})$ in the \emph{current} iteration at latest we switch cycles with the edge $uu_r$, i.e., it is not possible to \enquote{skip} this red edge of the previous iteration.
\end{claim}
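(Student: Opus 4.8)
The plan is to prove the claim by contradiction, using only the simplicity of $D$, the rotation-order structure of Observation~\ref{obs:order}, and the planarity guaranteed by invariant~\ref{j:1}. Fix the blue vertex $u\in V(H_{\text{prev}})$ together with its red edge $uu_r\in E(H_{\text{prev}})$, which exists by invariant~\ref{j:2} since $u$ is blue and is not the last rotation vertex of the previous iteration. By Observation~\ref{obs:order}, the side edges incident to $u$ appear in the rotation of $u$ in the same cyclic order as their red endpoints along the red cycle; in particular, the slot of $uu_r$ in this order is determined by the position of $u_r$ on the red cycle. Switching cycles at $u$ in the current iteration happens exactly at the first \emph{red} side edge encountered (a blue side edge would only be added as a leaf), so the claim is equivalent to the statement that, in the current rotation direction and starting from the edge through which the current iteration enters $u$, the slot of $uu_r$ is reached before the scan runs off the block of side edges into a cycle edge.

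First I would pin down how the current iteration enters $u$ and in which direction it rotates there. Here I would use the reversal performed in \ref{a:4} together with the retracing property noted above (the first edges of the reverse process coincide with the last edges of the previous iteration): tracing the current backbone from $x$ shows that it follows the previous backbone in reverse until the first genuinely new switch, which fixes the entry edge at $u$ and the orientation of its rotation. Second, using Observation~\ref{obs:order} I would locate the slot of $u_r$ relative to this entry edge and assume, for contradiction, that the scan reaches a cycle edge before the slot of $u_r$, so that $uu_r$ is skipped. Third, I would derive a crossing: the side edge along which the current iteration actually leaves $u$ (the first red edge on the scanned side), together with $uu_r$ and the two uncrossed monochromatic cycles, bounds a region that forces this new edge, or a later edge of the current backbone, to cross an edge of $H_{\text{prev}}$ that shares no endpoint with it. Since both edges belong to the single simple drawing $D$, such a crossing between independent edges is impossible, and this contradiction establishes the claim. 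Note that edges incident to $u$ cannot be the offending pair, as adjacent edges never cross in a simple drawing, so the relevant crossing is genuinely between independent edges.

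The main obstacle is the bookkeeping across the direction reversal in \ref{a:4}: one must argue precisely through which edge the current iteration re-enters $u$, and that the reversed rotation direction orients the scan toward $u_r$ rather than away from it. A secondary difficulty is distinguishing whether $u$ is a backbone vertex or a monochromatic leaf of $H_{\text{prev}}$, since $uu_r$ plays a slightly different role in the two situations (a switching edge versus a leaf edge), even though in both cases it is the red edge guaranteed by invariant~\ref{j:2}. Once the entry edge and the orientation are fixed, the crossing argument itself is short, relying only on Observation~\ref{obs:order} and the fact that $uu_r$ lies in the plane drawing formed by $H_{\text{prev}}$ and the two cycles by invariant~\ref{j:1}.
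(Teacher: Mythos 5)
Your argument rests on a false premise about simple drawings: in your third step you derive a crossing between two \emph{independent} edges of $D$ (one in the current iteration, one in $H_{\text{prev}}$) and declare it impossible, but simple drawings explicitly allow independent edges to cross once; only adjacent edges crossing, or independent edges crossing twice, are forbidden. In particular, edges of the current subdrawing $H$ may well cross edges of $H_{\text{prev}}$ --- invariant \ref{j:1} only makes the union of $H$ (respectively $H_{\text{prev}}$) with the two cycles plane, not the union of the two iterations. Indeed, the paper's own proof of this claim hinges on exactly such a \emph{legal} crossing: the red edge $uu_r\in E(H_{\text{prev}})$ crosses the red backbone edge $br\in E(H)$ of the current iteration, and this crossing is used to extract order information, not to produce a contradiction. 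So the contradiction you aim for does not materialize, and a single \enquote{skipping} vertex considered in isolation does not force any forbidden configuration (an adjacent-pair crossing or a double crossing).

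There is a second gap: your plan to pin down the entry edge at $u$ via the retracing property of \ref{a:4} only determines the current backbone up to the first genuinely new switch; for a blue rotation vertex $u$ reached after that point, the entry edge $ru$ is a new edge with no a priori relation to $H_{\text{prev}}$. The paper circumvents both problems with a minimal-counterexample argument that propagates backwards along the backbone: take $u$ to be the \emph{first} blue rotation vertex of the current iteration that skips its red edge $uu_r$. Then the red vertex $r$ from which the current iteration entered $u$ lies behind $u_r$, hence $r\neq x$, so there is a preceding red backbone edge $br$; the (allowed) crossing of $uu_r$ with $br$, combined with Observation~\ref{obs:order}, forces $r$ to lie behind $b_r$, meaning the earlier blue rotation vertex $b$ already skipped its red edge $bb_r$ --- contradicting the choice of $u$. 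This backward propagation along the backbone is the key idea missing from your outline.
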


\begin{claimproof}
Assume that this is not true and let $u$ be the \emph{first} blue rotation vertex violating this property, i.e., $u$ is incident (in $H$) to a red vertex \emph{after} $u_r$. Let $ru$ be the blue backbone edge (in $H$) that led from $r$ to $u$. In particular, $r$ lies behind $u_r$. This obviously also implies that $r$ is not equal to $x$. So, let $br$ be the red backbone edge (in $H$) that led from $b$ to $r$. In particular, the algorithm considered $b$ before $u$. Moreover, since the edges $uu_r$ and $br$ intersect, $r$ must be behind $b_r$ (the neighbor of $b$ incident to $b$'s red backbone edge in the previous iteration). Hence, $b$ must have skipped its red edge $bb_r$ from the previous iteration (see \figurename~\ref{fig:cylindrical_skip}). This is a contradiction to $u$ being the first such blue vertex. 
\end{claimproof}
		\vspace{2ex}

 \begin{figure}[htb]
		\centering
		\includegraphics[scale=0.6,page=3]{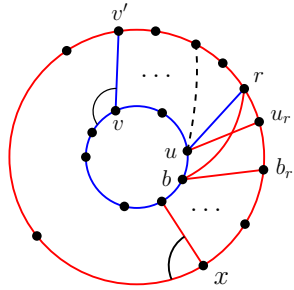}
		\caption{If the blue rotation vertex $u$ is incident to some edge behind $u_r$ in the iteration from $x$ to $v$ (the dashed edge), then $r$, the neighbor of the blue backbone edge, is also behind $u_r$. Hence, $b$ must also be a blue rotation vertex that skipped its red edge $bb_r$.}
		\label{fig:cylindrical_skip}
	\end{figure}	 

To summarize, $z$ is a blue vertex and by the above claim $z$ cannot
be in $V(H_{\text{prev}})$ (except if the stopping condition of covering all vertices from the other cycle was reached earlier).

Hence, it remains to show that all red vertices of $H_{\text{prev}}$ are also in $H$. If this was not the case, then in particular $v'$ is not in $H$ (Observation~\ref{obs:order}) and the edge $zv'$ would intersect the edge $vv'$ (see Figure~\ref{fig:cylindrical_revert}). Again, a contradiction to the drawing being simple.

    \begin{figure}[htb]
		\centering
		\includegraphics[scale=0.6,page=4]{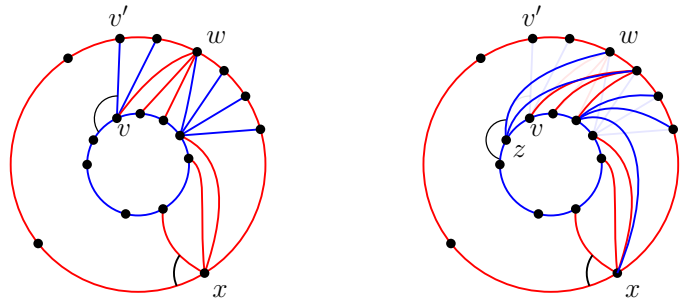}
		\caption{On the left, the algorithm started from $v$ and got stuck in $x$. On the right, the next iteration (in reverse direction) is illustrated. If we get stuck at the edge~$zw$ (rotating around $z$), there is no way to connect $z$ and $v'$ without crossing~$vv'$.}
		\label{fig:cylindrical_revert}
	\end{figure}

\end{description}

\paragraph{Step 2.} Now let $D$ be a 2-edge-colored cylindrical drawing that does not fulfill at least one of the properties \ref{p:1} and \ref{p:34}.

If it does not fulfill \ref{p:1}, the inner or outer cycle is empty, which implies that~$D$ is isomorphic to a 2-page book drawing and hence contains a monochromatic plane spanning tree (see Proposition~\ref{prop:2-page-book}).

So assume that $D$ fulfills \ref{p:1} but does not fulfill \ref{p:34}. 
If at least one of the cycles of $D$ is bichromatic (contains red and blue edges), then we iteratively remove a vertex whose incident cycle edges are of different color 
until we obtain a subdrawing $D'$ of $D$ in which both cycles are monochromatic. Clearly, $D'$ is a cylindrical drawing, since removing a vertex  cannot break any of the properties of a cylindrical drawing (all vertices still lie on the inner or outer circle, neither circle is crossed, and all edges between two vertices on the inner (outer) circle still lie completely inside (outside) that circle).

If the two cycles of $D'$ are of different color, $D'$ fulfills the properties \ref{p:1} and \ref{p:34} and hence contains a plane monochromatic spanning tree by Step~1.
If, on the other hand, the two cycles in $D'$ have the same color, then the union of them plus one side edge of that color gives a monochromatic plane spanning subdrawing for $D'$,
or, if such an edge does not exist, the according color class is not spanning and hence $D'$ contains a monochromatic plane spanning tree by \cref{obs:span}. 
Finally, as cycle edges are always uncrossed, 
we can extend the obtained spanning tree for $D'$ to one for~$D$ by re-adding the removed vertices in inverse order by \cref{obs:2_uncrossed}.~\hfill$\qed$
\end{proof}

\end{document}